\documentclass[12pt]{article}
\usepackage[letterpaper, margin=1in]{geometry}
\usepackage{amsmath,amsthm,amssymb}
\usepackage{braket}
\usepackage{mathtools}
\usepackage[colorlinks,citecolor=blue,linkcolor=blue,urlcolor=blue,anchorcolor=blue]{hyperref}
\usepackage[table]{xcolor}
\usepackage{multirow}
\usepackage{makecell}
\usepackage{authblk}

\theoremstyle{plain}
\newtheorem{theorem}{Theorem}
\newtheorem{lemma}[theorem]{Lemma}
\newtheorem{corollary}[theorem]{Corollary}
\newtheorem{fact}[theorem]{Fact}
\newtheorem*{restatedcorollarytwo}{Corollary 2}
\newtheorem*{restatedcorollarythree}{Corollary 3}

\theoremstyle{definition}
\newtheorem{definition}[theorem]{Definition}

\theoremstyle{remark}

\title{ Fanout Complexity of Symmetric Boolean Functions in $\mathsf{QAC}^0$}

\author{Boyan Xu\thanks{Email: xuby26@mail2.sysu.edu.cn}~ and Lvzhou Li\thanks{Email: lilvzh@mail.sysu.edu.cn (corresponding author)}\\
\small{{\it Institute of Quantum Computing and Software,}}
\small{{\it School of Computer Science and Engineering,}}\\
\small {{\it  Sun Yat-sen University, Guangzhou 510006, China}}}

\begin{document}

\maketitle
\begin{abstract}
Whether $\mathsf{QAC}^0$ can compute $\mathtt{PARITY}_n$ remains open. Computing $\mathtt{PARITY}_n$ is equivalent to implementing $\mathtt{FANOUT}_n$ under $\mathsf{QAC}^0$ reductions. This raises a more general question: for an arbitrary symmetric Boolean function $f:\{0,1\}^n\to\{0,1\}$, what fanout size is necessary and sufficient for computing $f$ in $\mathsf{QAC}^0$? We show that the answer is exactly the transition radius $\rho(f)$: computing $f$ and implementing $\mathtt{FANOUT}_{\rho(f)}$ are equivalent under $\mathsf{QAC}^0$ reductions. In particular, if $\rho(f)\ge n^\delta$ for some constant $\delta>0$, then computing $f$ is $\mathsf{QAC}^0_{\mathrm{f}}$-complete. Combined with Paturi's theorem, our characterization implies that if $\mathtt{PARITY}_n \notin \mathsf{QAC}^0$, then any Boolean function in $\mathsf{QAC}^0$ of approximate degree $n^{1/2+\Omega(1)}$ must be nonsymmetric.
\end{abstract}

\section{Introduction}

$\mathsf{QAC}^0$ is the class of constant-depth, polynomial-size quantum circuits built from arbitrary single-qubit gates and multi-controlled Toffoli gates. It was introduced by Moore as a quantum analogue of $\mathsf{AC}^0$ \cite{Moore1999FanoutParityCounting,GreenHomerMoorePollett2002CountingFanout}. A longstanding question is whether $\mathsf{QAC}^0$ can implement the $\mathtt{FANOUT}_n$ operation, which copies classical information from one control qubit to $n$ target qubits. In classical circuits, fanout is freely available: the output of a gate can be fed into arbitrarily many other gates. In quantum circuits, by contrast, $\mathtt{FANOUT}$ is a genuine circuit resource. We denote the class obtained by adjoining $\mathtt{FANOUT}_n$ to $\mathsf{QAC}^0$ by $\mathsf{QAC}^0_{\mathrm{f}} \coloneqq \mathsf{QAC}^0[\mathtt{FANOUT}_n]$. Equivalently, the question is whether $\mathsf{QAC}^0=\mathsf{QAC}^0_{\mathrm{f}}$.

The $\mathtt{FANOUT}$ and $\mathtt{PARITY}$ gates are equivalent up to Hadamard conjugation \cite{Moore1999FanoutParityCounting}. Thus, adjoining $\mathtt{PARITY}$ to $\mathsf{QAC}^0$ gives exactly the class $\mathsf{QAC}^0_{\mathrm{f}}$. Classically, adjoining parity, modular, and threshold gates to $\mathsf{AC}^0$ respectively leads to the hierarchy $\mathsf{AC}^0\subsetneq\mathsf{AC}^0[2]\subsetneq\mathsf{ACC}^0\subseteq\mathsf{TC}^0$ \cite{Ajtai1983Sigma11,FurstSaxeSipser1984Parity,Hastad1989AlmostOptimal,Razborov1987LowerBounds,Smolensky1987Algebraic}. In the quantum setting, by contrast, these distinctions collapse: adjoining modular or threshold gates to $\mathsf{QAC}^0$ yields the same class as adjoining parity or fanout; in particular, $\mathsf{QAC}^0_{\mathrm{f}} = \mathsf{QTC}^0$ \cite{GrierMorris2025QuantumThreshold,GrierMorrisWu2026QACContainsTC}. The class $\mathsf{QTC}^0$ contains all classical $\mathsf{TC}^0$ computations, including threshold computation, counting, sorting, and arithmetic \cite{HoyerSpalek2005QuantumFanout, TakahashiTani2016Collapse}. Hence the fanout question is all-or-nothing: a positive answer would imply $\mathsf{QAC}^0=\mathsf{QTC}^0$, whereas a negative answer would give the natural quantum counterpart of the classical parity lower bound. Despite substantial effort, the question has remained open for nearly three decades \cite{FangFennerGreenHomerZhang2006Quantum,Bera2011LowerBound,Rosenthal2021ApproximatingParity,NadimpalliEtAl2024PauliSpectrum,FennerGrierPadeThierauf2025Depth2,AnshuEtAl2025BarelySuperlinear, FoxmanParhamVasconcelosYuen2026RandomUnitaries, JoshiTalVasconcelosWright2026Improved,GrettaGuptaJoshi2026FourierConcentrated,DongOuYao2026GeometricallyLocalQAC0}.

While the longstanding question concerns $\mathtt{FANOUT}$ of size $n$, recent work on Dicke-state preparation suggests a finer-grained perspective based on the size of the available fanout gate. The $n$-qubit weight-$k$ Dicke state $\ket{D^n_k}$ is the uniform superposition over $n$-bit strings of Hamming weight $k$. Gretta, Gupta, and Joshi showed that $\mathtt{FANOUT}_k$ is necessary for preparing $\ket{D^n_k}$ in $\mathsf{QAC}^0$ for $k \le n/2$ \cite{GrettaGuptaJoshi2026FourierConcentrated}. In a companion work, they established the matching upper bound: $\ket{D^n_k}$ can be prepared by $\mathsf{QAC}^0[\mathtt{FANOUT}_k]$ circuits \cite{GrettaGuptaJoshi2026SuperConstantDicke}, building on the work of Joshi and Vasconcelos \cite{JoshiVasconcelos2026DickeStates}. Together, these results establish an exact correspondence between preparing the weight-$k$ Dicke state and implementing $\mathtt{FANOUT}_k$ for $k\le n/2$. Our main result establishes an analogous correspondence for the entire class of symmetric Boolean functions. For every symmetric Boolean function $f$, we characterize the corresponding fanout size in terms of a single structural parameter: its transition radius $\rho(f)$. More precisely, we prove that computing $f$ and implementing $\mathtt{FANOUT}_{\rho(f)}$ are equivalent under $\mathsf{QAC}^0$ reductions.

\subsection{Our results}\label{sec:intro-results}

A Boolean function $f:\{0,1\}^n\to\{0,1\}$ is \emph{symmetric} if its value depends only on the Hamming weight of the input. For $i\in\{0,\ldots,n\}$, let $f_i$ denote the value of $f$ on inputs of Hamming weight $i$. A \emph{transition} of $f$ occurs between two consecutive Hamming weights on which its value changes. We write
$$
\Delta_f
\coloneqq
\{i\in\{0,\ldots,n-1\}:f_i\ne f_{i+1}\}
$$
for the set of transitions and define the \emph{transition radius} of $f$ by
$$
\rho(f)
\coloneqq
\max_{i\in \Delta_f}\min\{i+1,n-i\}.
$$
For a constant function, we set $\rho(f)=0$ and regard $\mathtt{FANOUT}_0$ as the identity operation. Thus, $\rho(f)$ records how far the innermost transition lies from the nearer endpoint of the Hamming-weight interval. For example, $\rho(\mathtt{AND}_n)=\rho(\mathtt{OR}_n)=1$, whereas both parity and majority have transition radius $\lceil n/2\rceil$.

We show that the transition radius provides a precise correspondence between computing a symmetric function and implementing fanout of the corresponding size.

\begin{theorem}\label{thm:intro-main}
Let $f:\{0,1\}^n\to\{0,1\}$ be symmetric. Then:
\begin{enumerate}
    \item For any constant $c > 0$, if $f$ can be computed by a $\mathsf{QAC}^0$ circuit with probability at least $1/2 + 1/\log^c n$ on every input, then $\mathtt{FANOUT}_{\rho(f)}$ can be implemented by a $\mathsf{QAC}^0$ circuit.
    \item Conversely, if $\mathtt{FANOUT}_{\rho(f)}$ can be implemented by a $\mathsf{QAC}^0$ circuit, then $f$ can be computed by a $\mathsf{QAC}^0$ circuit exactly.
\end{enumerate}
In particular, we have:
$$
f \in \mathsf{QAC}^0 \Longleftrightarrow \mathtt{FANOUT}_{\rho(f)} \in \mathsf{QAC}^0.
$$
\end{theorem}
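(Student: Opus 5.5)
For the converse (part 2), the plan is to compute $f$ from its transitions near the two ends of the weight interval. Set $r=\rho(f)$. Every transition $i\in\Delta_f$ has either $i+1\le r$ or $n-i\le r$. So $f(x)$ is determined by three pieces of data: the value $\min(|x|,r)$ when $|x|$ is small, the value $\min(n-|x|,r)$ when $|x|$ is large, and the constant value of $f$ on the middle band of weights where no transition occurs.

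\begin{enumerate}
\item I would show that $\mathsf{QAC}^0[\mathtt{FANOUT}_r]$ computes the threshold indicators $[|x|\ge k]$ for all $k\le r$, and by negating inputs also $[n-|x|\ge k]$.
\item The natural route for a single indicator $[|x|\ge k]$ is to split the $n$ inputs into blocks of size about $r$ and compute exact counts with $\mathtt{FANOUT}_r$. Within a block this uses the known $\mathsf{QAC}^0_{\mathrm f}=\mathsf{QTC}^0$ simulation \cite{HoyerSpalek2005QuantumFanout,TakahashiTani2016Collapse}, scaled down to size $r$.
\item Capped counts $\min(\cdot,r)$ are then combined across the $n/r$ blocks. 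For this I would use the $\mathsf{AC}^0$-style technique of testing whether some set of blocks has counts summing to at least $k$. An $\mathtt{OR}$ over block patterns is too large, so instead I would cut the threshold into small pieces: $[|x|\ge k]$ is an $\mathtt{OR}$ over blocks that each reach a partial threshold.
\end{enumerate}

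Honestly, the cleanest version reduces threshold-$k$ on $n$ bits to threshold-$k$ on $\mathrm{poly}(k)$ bits by hashing, which I expect to be the technical heart. The hashing step: a random (or explicitly constructed, perfect-hash family) map $[n]\to[k^2]$ with OR-collapsing preserves $[|x|\ge k]$ on the inputs that matter. The collapsing uses only $\mathtt{OR}$, which is available in $\mathsf{QAC}^0$. The residual threshold on $k^2$ bits is then handled with $\mathtt{FANOUT}_r$, using polynomial equivalence of fanout sizes $r$ versus $r^2$ in $\mathsf{QAC}^0$ via a constant-depth tree of fanouts. Finally, $f$ is a fixed Boolean combination of these $O(r)$ indicators. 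Each indicator is computed into a fresh ancilla, the result is combined with a multi-controlled Toffoli, and everything is uncomputed to keep the computation clean.

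For part 1, I would use a restriction-based reduction. Pick a transition $i$ attaining $\rho(f)$, and suppose $i+1\le n-i$ (the other case is symmetric after complementing the inputs), so $r=i+1$. Hardwire $i+1-r'$ inputs to $1$ and $n-i-r'$ inputs to $0$, leaving $m$ free variables. On the free variables, $f$ then distinguishes weight $r'-1$ from weight $r'$. More usefully, I would fix $i$ ones and leave $r$ free bits, some of them in the all-zero/one-hot promise. The goal is a function on $r$ bits that behaves like $\mathtt{OR}$ but whose transition sits at "depth" $r$.

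What I really need is a parity-like object. I would try the following: with $r$ free bits and $i+1-r=0$ ones fixed, $f$ restricted to weights $0..r$ contains a transition at weight $r-1\to r$, i.e.\ it is $\mathtt{AND}_r$ up to padding. That is not enough, so instead I would exploit $n-i\ge r$. Fixing nothing, $f$ on inputs of weight in $[i-r+1,i+r]$ realizes an "exact-threshold at $i+1$" function with margin $r$ on both sides. That is the majority of $2r$ bits (padded). Majority on $2r$ bits with advantage $1/\mathrm{polylog}$ then gives $\mathtt{FANOUT}_r$. Amplification is by parallel repetition and majority-vote, where the majority is on $\mathrm{polylog}$ copies, doable in $\mathsf{QAC}^0$ by approximate counting. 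From there, the known equivalence of majority with parity/fanout under $\mathsf{QAC}^0$ reductions \cite{GrierMorris2025QuantumThreshold,GrierMorrisWu2026QACContainsTC} finishes part 1, provided those reductions scale with size.

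The main obstacle is exactly this scaling: the cited equivalences are for size $n$. One must check that they are size-preserving up to polynomial factors, and that weak advantage $1/\log^c n$ survives amplification within constant depth.
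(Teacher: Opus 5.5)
Your part 2 is essentially the paper's construction. The paper obtains thresholds from the perfect hash family $i\mapsto i\bmod m$ for $m\in\{k,\dots,L\}$, $L=O(k^2\log n)$, and then does a small symmetric computation. One detail you skipped: copying the inputs into all hash buckets needs $\mathtt{FANOUT}_L$. The paper gets this from $\mathtt{FANOUT}_k$ by a constant-depth tree together with Fact~\ref{fact:polylog-fanout}.

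Part 1, however, has a genuine gap. The claim that $f$ on weights $[0,2r-1]$ ``is the majority of $2r$ bits'' is false. The condition $\rho(f)=r$ only forces $f_r=f_{r+1}=\cdots=f_{n-r}$; transitions at any $j\le r-2$ are allowed. For example, take $f_j=j\bmod 2$ for $j\le r$ and $f_j=f_r$ for $j\ge r$. Every restriction of $f$ is $w\mapsto f_{t+w}$ on a window of weights $\{t,\dots,t+s\}$. Since this $f$ has a transition at every position $0,\dots,r-1$, the only windows containing exactly one transition, located at their center, have $s=1$. So no restriction of this $f$ is a nontrivial majority.

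This particular $f$ does restrict to $\mathtt{PARITY}_r$. But the pattern of $f$ below weight $r-1$ is arbitrary in general. Reducing each pattern to some known complete function, under only $1/2+1/\log^c n$ worst-case advantage, has no evident end. Even where a majority restriction exists, the cited majority/fanout equivalences are for exact or $1/\mathrm{poly}$-error computation at full input length. The scaling issue you flag as the main obstacle is left open.

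The missing idea is to use only the single central transition, in a way that ignores what the restriction does elsewhere. Restrict to $m=2k-1$ free bits so the transition lies between weights $k-1$ and $k$, and let $g(x)=\Pr[C'(x)=0]-\Pr[C'(x)=1]$. There are $m\binom{m-1}{k-1}\ge 2^{m-1}\sqrt{m/2}$ edges between these two layers, and each has $|g(x)-g(y)|\ge 4\varepsilon$. Hence $\mathbf{I}[g]=\Omega(\varepsilon^2\sqrt m)$. Since $\sum_d\mathbf{W}^d[g]\le 1$, this forces $\mathbf{W}^{\ge\ell}[g]\ge\varepsilon^2\sqrt{2/m}$ for $\ell=\lfloor\varepsilon^2\sqrt{2m}\rfloor+1$.

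The Gretta--Gupta--Joshi Fourier-tail reduction (Fact~\ref{fact:fourier-tail-to-fanout}) then gives $\mathtt{FANOUT}_\ell$ directly from the weak-advantage circuit. It needs no amplification, and its ancilla count is bounded in terms of $m$ and the total width $n+a$, which removes your scaling concern. For $k>\log^A n$ with $A>12c$ one has $\ell\ge m^{1/3}$, so a three-level fanout tree yields $\mathtt{FANOUT}_k$. For $k\le\log^A n$, Fact~\ref{fact:polylog-fanout} applies directly.
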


Here, $f\in\mathsf{QAC}^0$ means that $f$ can be computed exactly by a $\mathsf{QAC}^0$ circuit, while $\mathtt{FANOUT}_{\rho(f)}\in\mathsf{QAC}^0$ means that this unitary can be implemented exactly and cleanly by a $\mathsf{QAC}^0$ circuit.

When $\rho(f)=O(\log^d n)$ for some constant $d>0$, the classical characterization of symmetric $\mathsf{AC}^0$ functions shows that this is precisely the regime in which a symmetric function lies in $\mathsf{AC}^0$ \cite{Moran1987GeneralizedLowerBounds,BrustmannWegener1987SymmetricFunctions}. The corresponding $\mathsf{QAC}^0$ upper bound was already established by Grier, Morris, and Wu, who showed that polylogarithmic-size fanout is available in $\mathsf{QAC}^0$ and used it to prove that symmetric $\mathsf{AC}^0$ functions are contained in $\mathsf{QAC}^0$ \cite{GrierMorrisWu2026QACContainsTC}.

When $\rho(f)\ge n^\delta$ for some constant $\delta>0$, the correspondence yields a $\mathsf{QAC}^0_{\mathrm{f}}$-completeness result. By iterating $\mathtt{FANOUT}_{\rho(f)}$ in a tree of constant depth $O(1/\delta)$, one can implement $\mathtt{FANOUT}_n$. Hence, by Theorem~\ref{thm:intro-main}, any $\mathsf{QAC}^0$ circuit that computes $f$ with probability at least $1/2+1/\log^c n$ on every input can be used to implement $\mathtt{FANOUT}_n$, for any fixed constant $c>0$. Conversely, $\mathtt{FANOUT}_n$ suffices to compute $f$ exactly. Following Gretta, Gupta, and Joshi \cite{GrettaGuptaJoshi2026FourierConcentrated}, a task is said to be $\mathsf{QAC}^0_{\mathrm{f}}$-complete if it is equivalent to  $\mathtt{FANOUT}_n$ under $\mathsf{QAC}^0$ reductions. This gives the following corollary.

\begin{corollary}\label{cor:intro-qacf-complete}
Let $f:\{0,1\}^n\to\{0,1\}$ be symmetric, and fix any constant $c>0$. If $\rho(f)\ge n^\delta$ for some constant $\delta>0$, then computing $f$ with probability at least $1/2+1/\log^c n$ on every input is $\mathsf{QAC}^0_{\mathrm{f}}$-complete.
\end{corollary}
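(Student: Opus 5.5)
The corollary should follow directly from Theorem~\ref{thm:intro-main} plus a constant-depth fanout amplification argument, so the plan is to verify both directions of the $\mathsf{QAC}^0$-equivalence with $\mathtt{FANOUT}_n$.

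\emph{Hardness direction.} Suppose a $\mathsf{QAC}^0$ circuit computes $f$ with probability at least $1/2+1/\log^c n$ on every input. By part~1 of Theorem~\ref{thm:intro-main}, this yields a $\mathsf{QAC}^0$ implementation of $\mathtt{FANOUT}_m$ with $m=\rho(f)\ge n^\delta$. Write $C_m$ for this circuit, of constant depth $d$ and polynomial size. To obtain $\mathtt{FANOUT}_n$, arrange copies of $C_m$ in a tree.
\begin{enumerate}
\item The first layer copies the control bit onto $m$ targets. Each of these is then used as the control of a fresh $\mathtt{FANOUT}_m$, and so on.
\item After $t=\lceil 1/\delta\rceil$ layers there are $m^t\ge n$ copies of the control bit, held in ancillas initialized to $\ket{0}$.
\item The $n$ real targets are XORed in with one more fanout layer, using CNOTs from the leaves: leaf $j$ acts as the control of a CNOT onto target $j$.
\item The intermediate ancilla copies are then uncomputed by running layers $1,\dots,t$ in reverse, so the result is clean.
\end{enumerate}
Because $\mathtt{FANOUT}$ is a permutation on computational basis states that acts correctly in superposition, this gives exactly the unitary $\ket{b}\ket{y}\mapsto\ket{b}\ket{y\oplus b^n}$. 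The resulting circuit has depth $O(d/\delta)=O(1)$ and size $\mathrm{poly}(n)$, since it uses at most $O(n)$ copies of $C_m$. When $m>n$, which cannot actually happen because $\rho(f)\le\lceil n/2\rceil$, or when $m^t$ overshoots $n$, the extra outputs are simply discarded ancillas that are uncomputed.

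\emph{Easiness direction.} $\mathtt{FANOUT}_n$ implements $\mathtt{FANOUT}_{\rho(f)}$ by restricting to the first $\rho(f)\le n$ targets. Part~2 of Theorem~\ref{thm:intro-main} then computes $f$ exactly in $\mathsf{QAC}^0$ with those fanout gates. In particular this meets the probability requirement.

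\emph{Main obstacle.} The only delicate point is whether ``$\mathsf{QAC}^0$ implementation of $\mathtt{FANOUT}_m$'' in part~1 is exact and clean, as the convention after the theorem states. If instead it were only approximate, the tree composition would accumulate error over $O(n)$ gate uses. I would rely on the stated exact-and-clean convention and note that the composition preserves exactness.
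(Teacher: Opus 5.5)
Your proof is correct and follows the paper: the hardness direction is Theorem~\ref{thm:lower-bound} (part~1 of Theorem~\ref{thm:intro-main}) plus a fanout tree of depth $O(1/\delta)$, and relying on exactness and cleanness is justified because the paper's ``implements'' is exact and clean (your $O(n)$ count of copies of $C_m$ needs $t$ chosen minimal with $m^t\ge n$, though polynomially many copies suffice anyway). The only difference is the easy direction: the paper cites Fact~\ref{fact:symmetric-in-qacf} directly, whereas you shrink $\mathtt{FANOUT}_n$ to $\mathtt{FANOUT}_{\rho(f)}$ and invoke Theorem~\ref{thm:upper-bound}; the shrinking can be done cleanly by putting the unused targets in $\ket{+}$, since $X\ket{+}=\ket{+}$, and your route is equally valid but more roundabout, because Theorem~\ref{thm:upper-bound} itself rests on that same fact.
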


Table~\ref{tab:prior} places our result in the context of known $\mathsf{QAC}^0_{\mathrm{f}}$-completeness results for symmetric functions. Several previously separate results for $\mathtt{PARITY}$, $\mathtt{MAJORITY}$, $\mathtt{THRESHOLD}$, $\mathtt{EXACT}$, and $\mathtt{MOD}_q$ are captured by a single criterion based on transition radius: symmetric functions with polynomial transition radius are complete when computed with probability at least $1/2+1/\mathsf{polylog}(n)$ on every input. The known average-case completeness results are not implied by our theorem.

\begin{table}[!t]
\centering
\footnotesize
    \renewcommand{\arraystretch}{1.5}
    \setlength{\tabcolsep}{1.2pt}
    \begin{tabular}{|c|c|c|c|c|}
\hline
\rowcolor{gray!10} Function $f$ & $\rho(f)$ &  Computation & Work & \makebox[\widthof{\textbf{Strengthened}}][c]{
    \makecell[c]{Relation to\\[-1pt]this work}}\\
\hline 
\multirow{3}{*}{$\mathtt{PARITY}_n$}
& \multirow{3}{*}{$\lceil n/2 \rceil$}
& Exact & \cite{Moore1999FanoutParityCounting} & Recovered \\
&
& Average-case correlation $\ge 1/\mathsf{polylog}(n)$ & \cite{JoshiTalVasconcelosWright2026Improved} & Not covered\\
&
& Average-case correlation $\ge 1/\mathsf{poly}(n)$ & \cite{GrettaGuptaJoshi2026FourierConcentrated} & Not covered\\
\hline
\multirow{3}{*}{$\mathtt{MAJORITY}_n$}
& \multirow{3}{*}{$\lceil n/2 \rceil$}
& Exact & \cite{GrierMorris2025QuantumThreshold,GrierMorrisWu2026QACContainsTC} & Recovered \\
&
& Worst-case probability $\ge 1/2 + 1/\mathsf{polylog}(n)$ & \cite{GrettaGuptaJoshi2026FourierConcentrated} & \textbf{Generalized}\\
&
& Average-case correlation $\ge 1-1/\mathsf{poly}(n)$ & \cite{GrettaGuptaJoshi2026FourierConcentrated} & Not covered\\
\hline
\multirow{2}{*}{\makecell[c]{$\mathtt{THRESHOLD}^n_k$\\[-1pt]
$(n^{\Omega(1)}\le k\le \lceil n/2 \rceil)$}}
&
\multirow{2}{*}{$k$}
& Exact &
\cite{GrierMorris2025QuantumThreshold,GrierMorrisWu2026QACContainsTC} & \textbf{Strengthened}\\
&
& Unitary approximation error $\le 1/\mathsf{poly}(n)$ &
\cite{GrierMorris2025QuantumThreshold,GrierMorrisWu2026QACContainsTC} & \textbf{Strengthened}\\
\hline
\multirow{2}{*}{$\mathtt{EXACT}^n_{\lceil n/2 \rceil}$}
&
\multirow{2}{*}{$\lceil n/2 \rceil$}
& Exact &
\cite{GrierMorris2025QuantumThreshold,GrierMorrisWu2026QACContainsTC} & \textbf{Strengthened}\\
&
& Unitary approximation error $\le 1/\mathsf{poly}(n)$ &
\cite{GrierMorris2025QuantumThreshold,GrierMorrisWu2026QACContainsTC} & \textbf{Strengthened}\\
\hline
$\mathtt{MOD}^n_{q}$, fixed $q > 2$
& $\Theta(n)$
& Exact  & \cite{GreenHomerMoorePollett2002CountingFanout} & \textbf{Strengthened} \\
\hline
\makecell[c]{Symmetric $f$ with \\[-1pt]
$n^{\Omega(1)} \le \rho(f) \le \lceil n/2 \rceil$}
& $n^{\Omega(1)}$
& Worst-case probability $\ge 1/2 + 1/\mathsf{polylog}(n)$ & \textbf{This work} & --\\
\hline
\end{tabular}
\caption{Known $\mathsf{QAC}^0_{\mathrm{f}}$-completeness results for symmetric Boolean functions. In the last column, ``Recovered'' indicates that the corresponding prior result follows as a special case of our theorem; ``Generalized'' indicates that our theorem extends the corresponding statement from a specific function to symmetric functions of polynomial transition radius; and ``Strengthened'' indicates that our theorem yields the same completeness conclusion under a weaker computational assumption.}
\label{tab:prior} 
\end{table}

Our characterization also yields consequences for approximate degree, a standard complexity measure of Boolean functions. Recall that the approximate degree $\widetilde{\deg}(f)$ is the minimum degree of a real polynomial that approximates $f$ pointwise to error at most $1/3$. Bun and Thaler showed that, for every constant $\delta>0$, there exists $g\in\mathsf{AC}^0$ with $\widetilde{\deg}(g)=\Omega(n^{1-\delta})$ \cite{BunThaler2017NearlyOptimal}. It remains open whether $\mathsf{QAC}^0$ contains Boolean functions of large approximate degree. For nonconstant symmetric functions, Paturi's characterization \cite{Paturi1992ApproximateSymmetric}, together with the definition of transition radius, gives
$$
\widetilde{\deg}(f)
=
\Theta\!\left(\sqrt{n\rho(f)}\right).
$$
Thus, approximate degree $n^{1/2 + \Omega(1)}$ forces polynomially large transition radius, and Corollary~\ref{cor:intro-qacf-complete} gives the following result.

\begin{corollary}\label{cor:intro-approx-degree}
Let $f:\{0,1\}^n\to\{0,1\}$ be symmetric, and suppose that, for some constant $\eta > 0$, $\widetilde{\deg}(f)=\Omega\!\left(n^{1/2+\eta}\right)$. Then, for every constant $c > 0$, computing $f$ with probability at least $1/2+1/\log^c n$ on every input is $\mathsf{QAC}^0_{\mathrm{f}}$-complete.
\end{corollary}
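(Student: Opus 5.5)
The statement to prove is Corollary~\ref{cor:intro-approx-degree}. The plan is to reduce it to Corollary~\ref{cor:intro-qacf-complete} by showing that the degree hypothesis forces $\rho(f)\ge n^{\delta}$ for some constant $\delta>0$.

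First, $f$ is nonconstant. A constant function has approximate degree $0$, which contradicts $\widetilde{\deg}(f)=\Omega(n^{1/2+\eta})$ for all large $n$. Since $f$ is nonconstant, Paturi's characterization \cite{Paturi1992ApproximateSymmetric} applies in the form quoted above. It gives an absolute constant $C$ with
$$
\widetilde{\deg}(f)\le C\sqrt{n\rho(f)}.
$$
This needs only the upper-bound direction of the $\Theta$. That direction is immediate once we note that Paturi's parameter $\Gamma(f)=\min_{i\in\Delta_f}|2i-n+1|$ satisfies $n-\Gamma(f)=\Theta(\rho(f))$. The identity follows by checking both cases $i+1\le n-i$ and $i+1>n-i$ in the definition of $\rho(f)$.

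Next, combine this with the hypothesis $\widetilde{\deg}(f)\ge c' n^{1/2+\eta}$, valid for large $n$. We get $c' n^{1/2+\eta}\le C\sqrt{n\rho(f)}$. Squaring gives
$$
\rho(f)\ge \frac{c'^2}{C^2}\, n^{2\eta}.
$$
For all sufficiently large $n$ this is at least $n^{\eta}$, so the hypothesis of Corollary~\ref{cor:intro-qacf-complete} holds with $\delta=\eta$. Small $n$ are absorbed, since completeness is an asymptotic statement about families, and finitely many input lengths can be hardwired in constant size.

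Finally, apply Corollary~\ref{cor:intro-qacf-complete} with the given constant $c$. It yields exactly the claimed $\mathsf{QAC}^0_{\mathrm{f}}$-completeness of computing $f$ with worst-case success probability $1/2+1/\log^c n$.

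The only step needing care is the translation between Paturi's parameter and $\rho(f)$, including the off-by-one conventions near the middle of the weight range. These affect only constants, so I expect no real obstacle.
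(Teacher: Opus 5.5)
Your proposal is correct and follows the paper's proof essentially step for step. You derive nonconstancy from the degree hypothesis, combine Paturi's theorem with $n-\Gamma(f)=2\rho(f)-1$ to get $\rho(f)=\Omega(n^{2\eta})\ge n^{\eta}$ for large $n$, and then invoke Corollary~\ref{cor:intro-qacf-complete}; your remark that only the upper-bound half of Paturi's $\Theta$ is needed is a small but accurate refinement.
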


Indeed, Paturi's characterization implies $\rho(f)=\Omega(n^{2\eta})$, so the result follows directly from Corollary~\ref{cor:intro-qacf-complete}.  Consequently, if $\mathtt{PARITY}_n \notin \mathsf{QAC}^0$, then any function in $\mathsf{QAC}^0$ of approximate degree $n^{1/2+\Omega(1)}$ must be nonsymmetric.

\subsection{Technical overview}

We give an overview of the proof of Theorem~\ref{thm:intro-main}.

\paragraph{From fanout to the symmetric function.}
This direction is a parameterized version of the exact-weight construction of Grier, Morris, and Wu~\cite{GrierMorrisWu2026QACContainsTC}. Let $k=\rho(f)$. Since all transitions of $f$ occur among the first or last $k$ Hamming-weight layers, up to a possible negation of the output, $f$ is an $\mathtt{OR}$ of at most $2k$ predicates $\mathtt{EXACT}_j^n$, with $j<k$ or $j>n-k$. Using $\mathtt{FANOUT}_k$, the construction of Grier, Morris, and Wu evaluates these exact-weight predicates in parallel. A final $\mathtt{OR}$ gate, followed by an output negation if needed, computes $f$ exactly.

\paragraph{From the symmetric function to fanout.}

The new ingredient is to convert a transition at radius $k$ into the Fourier nonconcentration needed to produce fanout. Suppose that a circuit $C$ computes $f$ with probability at least $1/2+\varepsilon$ on every input, where $\varepsilon \ge 1/\log^c n$. We may assume that $k \ge \log^A n$ for some constant $A > 12c$, since the polylogarithmic case follows from Fact~\ref{fact:polylog-fanout}.

Choose a transition attaining $\rho(f) = k$. After negating all input bits if necessary, we may assume that it occurs between Hamming weights $k-1$ and $k$. Keeping $m=2k-1$ input bits and fixing the rest to zero gives a restricted symmetric function $f'$ with this transition at its center. Let $C'$ be the corresponding restriction of $C$, and define
$$
g(x) \coloneqq \Pr[C'(x) = 0] - \Pr[C'(x) = 1].
$$
Since $C$ computes $f$ with probability at least $1/2+\varepsilon$ on every input, we have
$$
(-1)^{f'(x)}g(x) \ge 2\varepsilon \qquad \text{for every } x \in \{0,1\}^m.
$$
Hence, for every adjacent pair $x,y$ with $|x| = k-1$ and $|y| = k$,
$$
|g(x) - g(y)| \ge 4\varepsilon.
$$
There are $\Theta(2^m \sqrt{k})$ such pairs, and summing over them gives
$$
\sum_{S \subseteq [m]}{|S| \widehat{g}(S)^2} = \Omega(\varepsilon^2 \sqrt{k}).
$$
Consequently, for a suitable $\ell = \Theta(\varepsilon^2 \sqrt{k})$,
$$
\sum_{|S| \ge \ell}{\widehat{g}(S)^2} = \Omega\left(\frac{\varepsilon^2}{\sqrt{k}}\right).
$$
Since $m=2k-1$, $k \ge \log^A n$ and $\varepsilon \ge 1/\log^c n$, we have $\ell \ge m^{1/3}$, while the Fourier mass above is inverse polynomial. The Fourier-tail reduction of Gretta, Gupta, and Joshi~\cite{GrettaGuptaJoshi2026FourierConcentrated} therefore yields $\mathtt{FANOUT}_\ell$. Finally, $\ell^3\ge m\ge k$, so a three-level fanout tree implements $\mathtt{FANOUT}_k$.

\section{Preliminaries}\label{sec:preliminaries}

For an integer $n\ge1$, write $[n]\coloneqq\{1,\ldots,n\}$. For a bit string $x\in\{0,1\}^n$, let $|x|$ denote its Hamming weight.

\subsection{Analysis of Boolean functions}

We briefly recall the Fourier-analytic notation used below; see O'Donnell~\cite{ODonnell2014Analysis} for further background.

For $f,g:\{0,1\}^n\to\mathbb{R}$, define their correlation by
$$
\operatorname{corr}(f,g)
\coloneqq
\mathop{\mathbb{E}}_{x\sim\{0,1\}^n}[f(x)g(x)].
$$
Here $x\sim\{0,1\}^n$ means that $x$ is sampled uniformly at random from $\{0,1\}^n$. For each $S\subseteq[n]$, define the \emph{Fourier character}
$$
\chi_S(x)\coloneqq(-1)^{\sum_{j\in S}x_j}.
$$
Every function $f:\{0,1\}^n\to\mathbb{R}$ has the following unique \emph{Fourier expansion}
$$
f(x)=\sum_{S\subseteq[n]}\widehat{f}(S)\chi_S(x),
$$
where 
$$
\widehat{f}(S)
\coloneqq \operatorname{corr}(f,\chi_S)
=\mathop{\mathbb{E}}_{x\sim\{0,1\}^n}\bigl[f(x)\chi_S(x)\bigr]
$$
are called the \emph{Fourier coefficients} of $f$.

\begin{definition}[Degree]
    The \emph{degree} of a function $f:\{0,1\}^n\to\mathbb{R}$ is
    $$
    \deg(f)
    \coloneqq
    \max\bigl\{|S|:\widehat{f}(S)\ne0\bigr\}.
    $$
    We adopt the convention $\deg(0)=0$.
\end{definition}

\begin{definition}[Approximate degree]
    Let $f:\{0,1\}^n\to\mathbb{R}$ and $\varepsilon\ge0$. The \emph{approximate degree of $f$ with error $\varepsilon$} is
    $$
    \widetilde{\deg}_{\varepsilon}(f)
    \coloneqq
    \min\bigl\{\deg(g):g:\{0,1\}^n\to\mathbb{R},
    \ |g(x)-f(x)|\le\varepsilon\text{ for every }x\in\{0,1\}^n\bigr\}.
    $$
    We abbreviate $\widetilde{\deg}_{1/3}(f)$ as $\widetilde{\deg}(f)$.
\end{definition}

For $0\le d, \ell \le n$, define the \emph{level-$d$ Fourier weight} and the \emph{Fourier weight above level $\ell$} by
$$
\mathbf{W}^d[f]\coloneqq\sum_{|S|=d}\widehat{f}(S)^2,
\qquad
\mathbf{W}^{\ge\ell}[f]\coloneqq\sum_{d=\ell}^n\mathbf{W}^d[f].
$$
Parseval's identity gives
$$
\sum_{d=0}^n\mathbf{W}^d[f]=\mathbb{E}_x[f(x)^2].
$$

The \emph{influence} of coordinate $j \in [n]$ on $f:\{0,1\}^n\to\mathbb{R}$ is defined by 
$$
\mathbf{Inf}_j[f] \coloneqq \frac{1}{4}\mathop{\mathbb{E}}\limits_{x \sim \{0,1\}^n}[(f(x)-f(x \oplus e_j))^2],
$$
where $e_j$ is the $j$th standard basis vector. The \emph{total influence} of $f$ is
$$
\mathbf{I}[f]\coloneqq\sum_{j=1}^n\mathbf{Inf}_j[f].
$$

\begin{fact}[\cite{ODonnell2014Analysis}]\label{fact:influence-fourier}
    For every function $f : \{0,1\}^n \to \mathbb{R}$, we have
    $$
    \mathbf{I}[f]=\sum_{d=1}^n d\,\mathbf{W}^{d}[f].
    $$
\end{fact}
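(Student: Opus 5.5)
The identity $\mathbf{I}[f]=\sum_{d\ge1} d\,\mathbf{W}^d[f]$ will be proved by expressing each influence through a derivative operator and then applying Parseval. For each coordinate $j\in[n]$, I would set
$$
D_jf(x)\coloneqq\tfrac12\bigl(f(x)-f(x\oplus e_j)\bigr).
$$
With this normalization the definition of influence reads $\mathbf{Inf}_j[f]=\mathbb{E}_x[(D_jf(x))^2]$, because the factor $\tfrac14$ is exactly $(\tfrac12)^2$.

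The first step is to find the Fourier expansion of $D_jf$. Flipping coordinate $j$ changes each character by a sign depending on whether $j\in S$:
$$
\chi_S(x\oplus e_j)=
\begin{cases}
-\chi_S(x), & j\in S,\\
\chi_S(x), & j\notin S.
\end{cases}
$$
Inserting the Fourier expansion of $f$ and using linearity, the terms with $j\notin S$ cancel and the terms with $j\in S$ survive with coefficient $1$. This gives
$$
D_jf=\sum_{S\ni j}\widehat{f}(S)\chi_S .
$$
The characters are orthonormal under the uniform expectation, so this expansion is unique and Parseval's identity applies to it directly.

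Parseval applied to $D_jf$ then yields
$$
\mathbf{Inf}_j[f]=\sum_{S\ni j}\widehat{f}(S)^2 .
$$
Summing over $j\in[n]$ and exchanging the two finite sums, each set $S$ is counted once for every $j\in S$, that is, $|S|$ times. Hence
$$
\mathbf{I}[f]=\sum_{S}|S|\,\widehat{f}(S)^2 .
$$
Grouping the sets $S$ by their size $d=|S|$ gives $\sum_{d=0}^n d\,\mathbf{W}^d[f]$. The $d=0$ term vanishes, which leaves exactly the claimed sum over $d\ge 1$.

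No step is a genuine obstacle. The only place that needs care is the normalization: the $\tfrac14$ in the definition of influence must be matched with the $\tfrac12$ in $D_j$, so that no stray factor of $4$ appears. The argument holds verbatim for real-valued $f$, since it uses only linearity and orthonormality of the characters.
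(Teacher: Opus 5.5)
Your proof is correct: $D_jf=\tfrac12\bigl(f-f(\cdot\oplus e_j)\bigr)=\sum_{S\ni j}\widehat{f}(S)\chi_S$, so Parseval gives $\mathbf{Inf}_j[f]=\sum_{S\ni j}\widehat{f}(S)^2$, and summing over $j$ gives the identity. The paper does not prove this fact but cites O'Donnell, and your argument is the standard one given there (your $D_j$ is what O'Donnell calls the coordinate Laplacian $L_j$).
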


\subsection{Symmetric Boolean functions}

A Boolean function $f:\{0,1\}^n\to\{0,1\}$ is \emph{symmetric} if it is invariant under permutations of its input bits, or equivalently, if $f(x)$ depends only on $|x|$. For a symmetric function $f$, write $f_i$ for its common value on inputs of Hamming weight $i$. The \emph{transition set} of $f$ is
$$
\Delta_f
\coloneqq
\{i\in\{0,\ldots,n-1\}:f_i\ne f_{i+1}\}.
$$
If $\Delta_f\ne\varnothing$, define the \emph{transition radius} of $f$ by
$$
\rho(f)
\coloneqq
\max_{i\in \Delta_f}\min\{i+1,n-i\};
$$
for a constant function ($\Delta_f = \varnothing$), set $\rho(f)=0$. A Boolean function family $(f^{(n)})_{n\ge1}$ is symmetric if every $f^{(n)}$ is symmetric. We define symmetry, the transition set, and the transition radius similarly for $\{\pm 1\}$-valued functions.

For a nonconstant symmetric function $f$, Paturi's jump parameter is defined by
$$
\Gamma(f)
\coloneqq
\min\bigl\{|2i-n+1|:i\in\Delta_f\bigr\}.
$$

\begin{fact}[Paturi~\cite{Paturi1992ApproximateSymmetric}]\label{fact:paturi}
    For every nonconstant symmetric Boolean function $f:\{0,1\}^n\to\{0,1\}$,
    $$
    \widetilde{\deg}(f)
    =
    \Theta\!\left(\sqrt{n\bigl(n-\Gamma(f)\bigr)}\right).
    $$
\end{fact}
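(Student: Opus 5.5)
Since Fact~\ref{fact:paturi} is quoted from \cite{Paturi1992ApproximateSymmetric}, I only sketch how I would reprove it. Throughout, set $k\coloneqq\min\{i+1,n-i\}$ for the transition $i\in\Delta_f$ closest to the center. Then $n-\Gamma(f)=2k-O(1)$, so the claim is equivalent to $\widetilde{\deg}(f)=\Theta(\sqrt{nk})$. The first step is Minsky--Papert symmetrization. Averaging an approximating polynomial over all permutations of the coordinates gives a univariate polynomial $q$ of the same degree $d$ with $|q(j)-f_j|\le 1/3$ for every $j\in\{0,\ldots,n\}$. Conversely, any such $q$ yields the multivariate approximation $q(x_1+\cdots+x_n)$. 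So both bounds reduce to statements about univariate polynomials sampled at the integers $0,\ldots,n$.

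\emph{Lower bound.} Such a $q$ satisfies $|q(j)|\le 4/3$ on all integers in $[0,n]$, and $|q(i+1)-q(i)|\ge 1/3$ at the transition $i$. By the mean value theorem, $|q'(\xi)|\ge 1/3$ for some $\xi\in[i,i+1]$. I would then combine two classical facts.
\begin{enumerate}
\item An Ehlers--Zeller / Coppersmith--Rivlin type estimate. A degree-$d$ polynomial that is bounded by $O(1)$ on the integers of $[0,n]$ is bounded by $O(1)$ on the whole real subinterval $[a,n-a]$, provided $a\ge Cd^2/n$. When $d\le c\sqrt n$, this holds on all of $[0,n]$.
\item Bernstein's inequality on that subinterval: $|q'(x)|\lesssim d/\sqrt{(x-a)(n-a-x)}$.
\end{enumerate}
Suppose, for contradiction, that $d\le c\sqrt{nk}$ with $c$ small. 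Then $d^2/n\le c^2k$, so $\xi$ lies at distance $\Omega(k)$ from both ends of the good interval. Bernstein's inequality gives $|q'(\xi)|=O(d/\sqrt{nk})<1/3$, which is a contradiction. The delicate point is the quantitative form of fact~1. Integer-boundedness controls $q$ only away from the endpoints, at distance of order $d^2/n$, and the argument needs this distance to scale with $k$.

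\emph{Upper bound.} Up to negating the output and reflecting $j\mapsto n-j$, $f_j$ equals a fixed constant $b$ for $k\le j\le n-k$. Hence $f$ is determined by its values on the two end blocks of width at most $k$. I would construct $p$ as follows.
\begin{enumerate}
\item Build a Chebyshev-based polynomial of degree $O(\sqrt{nk})$. Rescale $T_{\sqrt{n/k}}$ so that it is bounded on $[k,n-k]$ and grows quickly outside this range. Use it to produce a localizer $L$ that is close to $0$ on $[k,n-k]$ and close to $1$ on each end block.
\item Multiply $L$ by a low-degree interpolant of degree $O(k)$ that matches the values $f_j-b$ on the end blocks, and add $b$.
\end{enumerate}
The main obstacle, which I expect to be the hardest step, is error control. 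Since $O(\sqrt{nk})\ge O(k)$, the degree budget is fine. However, the interpolant of arbitrary values on $k$ points can be exponentially large in $k$ away from those points. It must therefore be tamed by the decay of $L$ there, to within $1/3$, without spending extra $\log k$ factors of degree. A naive amplification of $L$ to suppress this growth costs such a logarithmic factor. To avoid it, I would follow Paturi and choose $L$ to decay geometrically with distance into the middle region, so that its decay exactly compensates the interpolant's growth.
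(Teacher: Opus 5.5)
The paper does not prove this statement. It imports Paturi's theorem as a black box and only derives $\Gamma(f)=n+1-2\rho(f)$, so your sketch has to stand on its own.

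\textbf{Lower bound.} This half stands, with two caveats.
\begin{itemize}
\item Your item~1 is stated in the correct interior form, but it is not the Coppersmith--Rivlin theorem. That theorem is global, $\max_{[0,n]}|q|\le e^{O(d^2/n)}$. With $d=c\sqrt{nk}$ it gives only $M\le e^{O(c^2k)}$, which swamps the gain $|q'(\xi)|\lesssim (d/\sqrt{nk})M=cM$ from Bernstein. The interior $O(1)$ bound at distance $Cd^2/n$ from the ends is true but deeper; see, e.g., Rakhmanov's bounds for polynomials with unit discrete norm.
\item For $k=O(1)$, $\xi$ may lie within distance $1$ of an endpoint, where Bernstein's bound degenerates. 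There you should use Markov's inequality $|q'|\le 2d^2M/n$ on $[0,n]$ instead.
\end{itemize}

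\textbf{Upper bound.} This half has a genuine gap, and the obstacle is not growth far from the blocks. Take $b=0$ and let $f_j=1$ exactly for those $j<k$ with $k-1-j$ even, so $\rho(f)=k$. The interpolant $P$ of $f$ on the two blocks satisfies
\[
P(k)=\sum_{i<k,\ k-i\ \mathrm{odd}}\binom{k}{i}\pi_i,
\qquad
\pi_i=\prod_{s=n-k+1}^{n}\frac{s-k}{s-i}.
\]
For $k\le n/4$ we have $\pi_i\in\bigl[e^{-O(k^2/n)},1\bigr]$, so $|P(k)|=2^{\Omega(k)}$. (Near $n/2$, $|P|$ is even of order $(n/k)^{\Theta(k)}$ when $k\ll n$.) Hence $L$ must fall from about $1$ at $k-1$ to $2^{-\Omega(k)}$ at $k$, a distance of one, where ``geometric decay with distance'' buys nothing.

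A localizer built from a rescaled Chebyshev polynomial is small on the \emph{real} interval $[k,n-k]$. Any degree-$D$ polynomial with $|L|\le\varepsilon$ there satisfies
\[
|L(k-1)|\le\varepsilon\,T_D\!\left(1+\tfrac{2}{n-2k}\right)\le\varepsilon\,e^{2D/\sqrt{n-2k}} .
\]
For $D=O(\sqrt{nk})$ this forces $\varepsilon\ge e^{-O(\sqrt k)}$. So your scheme needs degree $\Omega(k\sqrt n)$, a polynomial loss rather than a $\log k$ loss. A working $L$ would have to be tiny \emph{at the integers} of the middle while large between them. Your sketch supplies no such mechanism, and this use of integrality is the real content of the upper bound.

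\textbf{A repair.} One clean fix follows the decomposition in Theorem~\ref{thm:upper-bound}. Write $f=b+(1-2b)\sum_{j\in S}\mathtt{EXACT}_j$ with $S=\{j:f_j\ne b\}$, so $|S|\le 2k$. Approximate each $\mathtt{EXACT}_j$ to error $1/(6k)$. This uses the known small-error bound $O\bigl(\sqrt{n(\min\{j,n-j\}+1)}+\sqrt{n\log(1/\varepsilon)}\bigr)$, whose dependence on $\varepsilon$ is additive. The total degree is then $O(\sqrt{nk}+\sqrt{n\log k})=O(\sqrt{nk})$, and the total error is at most $1/3$. Alternatively, obtain the whole upper bound from an $O(\sqrt{nk})$-query quantum counting algorithm for $f$ via the polynomial method of Beals et al.
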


The two transition parameters are related by
$$
\Gamma(f)=n+1-2\rho(f).
$$
Indeed, for every $0\le i\le n-1$,
$$
|2i-n+1|=n+1-2\min\{i+1,n-i\},
$$
so minimizing over $i\in\Delta_f$ gives the identity above.

For $0\le j\le n$, define
$$
\mathtt{EXACT}^n_j(x)\coloneqq\mathbf{1}[|x|=j],
\qquad
\mathtt{THRESHOLD}^n_j(x)\coloneqq\mathbf{1}[|x|\ge j].
$$
For an integer $q\ge2$, define
$$
\mathtt{MOD}^n_q(x)
\coloneqq
\mathbf{1}[|x|\equiv 0\pmod q].
$$
We also define
$$
\mathtt{PARITY}_n(x)
\coloneqq
\bigoplus_{i=1}^n x_i
=\mathbf{1}[|x|\text{ is odd}],
\qquad
\mathtt{MAJORITY}_n(x)
\coloneqq
\mathbf{1}[|x|>n/2].
$$
Here $\mathbf{1}[\mathcal{E}]$ denotes the indicator of the condition $\mathcal{E}$. In particular,
$$
\mathtt{OR}_n=\mathtt{THRESHOLD}^n_1,
\qquad
\mathtt{AND}_n=\mathtt{EXACT}^n_n=\mathtt{THRESHOLD}^n_n.
$$
We omit the input length $n$ when it is clear from context.

\subsection{Circuit model}

For every Boolean function $f:\{0,1\}^n\to\{0,1\}$, the \emph{reversible gate associated with $f$} is the $(n+1)$-qubit unitary $U_f$ defined by
$$
U_f\ket{x,b}
=
\ket{x,b\oplus f(x)}.
$$

For an integer $r\ge1$, the $r$-controlled Toffoli gate is the reversible gate $U_{\mathtt{AND}_r}$ associated with the $r$-bit $\mathtt{AND}$ function; explicitly,
$$
\mathtt{TOFFOLI}_r
\ket{x_1,\ldots,x_r,b}
=
\ket{x_1,\ldots,x_r,b\oplus(x_1\wedge\cdots\wedge x_r)}.
$$
The case $r=1$ is a $\mathtt{CNOT}$ gate. The $\mathtt{FANOUT}$ gate on one control and $r$ targets is the $(r+1)$-qubit unitary
$$
\mathtt{FANOUT}_r
\ket{b,x_1,\ldots,x_r}
=
\ket{b,x_1\oplus b,\ldots,x_r\oplus b}.
$$
We regard $\mathtt{FANOUT}_0$ as the identity operation.

A \emph{$\mathsf{QAC}$ circuit} is a quantum circuit composed of arbitrary single-qubit unitaries and multi-controlled Toffoli gates. We adopt the depth convention used in prior work. Namely, a $\mathsf{QAC}$ circuit $C$ has depth $d$ if it admits a decomposition
$$
C=L_0M_1L_1M_2L_2\cdots M_dL_d,
$$
where each $L_i$ is a layer of single-qubit gates and each $M_i$ is a layer of multi-controlled Toffoli gates acting on pairwise disjoint sets of qubits. Thus, the single-qubit layers do not contribute to the depth. 

Let $C$ be a $\mathsf{QAC}$ circuit acting on $n$ input qubits and $a$ ancillae, with the last ancilla designated as the target qubit $t$. For each $x\in\{0,1\}^n$, let $C(x)\in\{0,1\}$ denote the outcome of measuring $t$ in the computational basis after applying $C$ to $\ket{x}\ket{0^{a-1}}\ket{0}_t$. For $b\in\{0,1\}$, the distribution of $C(x)$ is given by
$$
p_b(x) \coloneqq \Pr[C(x)=b]
=
\left\|
\bigl(\mathbb{I} \otimes\bra{b}_t\bigr)
C\ket{x}\ket{0^{a-1}}\ket{0}_t
\right\|_2^2,
$$
where $\mathbb{I}$ acts on all qubits other than $t$. The associated \emph{real-valued Boolean function computed by $C$} is
$$
g_C(x) \coloneqq p_0(x)-p_1(x).
$$

Let $f:\{0,1\}^n \to \{0,1\}$ be a Boolean function. We say that the circuit $C$ \emph{computes $f$ exactly} if 
$$
\Pr[C(x) = f(x)] = 1
\qquad
\text{for every }x\in\{0,1\}^n.
$$
For $p\in[0,1]$, we say that $C$ \emph{computes $f$ with
worst-case probability at least $p$} if
$$
\Pr[C(x)=f(x)]\ge p
\qquad
\text{for every }x\in\{0,1\}^n.
$$
In particular, for $\varepsilon\in[0,1/2]$, $C$ computes $f$
with worst-case probability at least $1/2+\varepsilon$ if and only if
$$
g_C(x)(-1)^{f(x)}\ge2\varepsilon
\qquad
\text{for every }x\in\{0,1\}^n.
$$

\begin{definition}[$\mathsf{QAC}^0$ circuit family]\label{def:qac0-family}
    For each $n \ge 1$, let $C_n$ be a $\mathsf{QAC}$ circuit of depth $d(n)$ acting on $n$ input qubits and $a(n)$ ancillae. The circuit family $\mathcal{C}=(C_n)_{n\ge1}$ is called a $\mathsf{QAC}^0$ circuit family if there exist constants $d_0,\alpha>0$ and $n_0\ge1$ such that, for all $n\ge n_0$, it holds that $a(n) \le n^\alpha$ and $d(n) \le d_0$.
\end{definition}

Given a Boolean function family $\mathcal{F}=(f^{(n)})_{n\ge1}$, where each $f^{(n)}:\{0,1\}^n\to\{0,1\}$, and a circuit family $\mathcal{C}=(C_n)_{n\ge1}$, we say that $\mathcal{C}$ \emph{computes $\mathcal{F}$ exactly} if $C_n$ computes $f^{(n)}$ exactly for every $n\ge1$. Computing a function family with worst-case probability at least $p(n)$ is defined analogously. We write $\mathcal{F}\in\mathsf{QAC}^0$ if some $\mathsf{QAC}^0$ circuit family computes $\mathcal{F}$ exactly.

A $\mathsf{QAC}$ circuit $C$ \emph{implements} an $m$-qubit unitary $U$ using $a$ ancillae if, for every $m$-qubit state $\ket{\psi}$,
$$
C\ket{\psi}\ket{0^a}=\bigl(U\ket{\psi}\bigr)\ket{0^a}.
$$

Let $k=k(n)$ and $0 \le k(n) \le n$. We write $\mathtt{FANOUT}_k\in\mathsf{QAC}^0$ if there is a constant-depth family $(C_n)_{n\ge1}$ of $\mathsf{QAC}$ circuits using $\mathsf{poly}(n)$ ancillae such that $C_n$ implements the $(n+1)$-qubit unitary $\mathtt{FANOUT}_{k(n)}\otimes\mathbb{I}_{n-k(n)}$ for every $n\ge1$.

\begin{definition}[{$\mathsf{QAC}^0[\mathtt{FANOUT}_k]$ circuit family}]\label{def:qac0-with-fanout}
    Let $k=k(n)$ satisfy $0\le k(n)\le n$. For each $n\ge1$, let $C_n$ be a quantum circuit of depth $d(n)$ acting on $n$ input qubits and $a(n)$ ancillae, composed of arbitrary single-qubit unitaries, multi-controlled Toffoli gates, and $\mathtt{FANOUT}_{k(n)}$ gates, where depth is measured by counting layers of multi-qubit gates as above. The circuit family $\mathcal{C}=(C_n)_{n\ge1}$ is called a $\mathsf{QAC}^0[\mathtt{FANOUT}_k]$ circuit family if there exist constants $d_0,\alpha>0$ and $n_0\ge1$ such that, for all $n\ge n_0$, it holds that $a(n)\le n^\alpha$ and $d(n)\le d_0$.
\end{definition}

The notation $\mathcal{F}\in\mathsf{QAC}^0[\mathtt{FANOUT}_k]$ is defined analogously to $\mathcal{F}\in\mathsf{QAC}^0$. We also write $\mathsf{QAC}^0_{\mathrm{f}}\coloneqq\mathsf{QAC}^0[\mathtt{FANOUT}_n]$.

\begin{fact}[\cite{HoyerSpalek2005QuantumFanout,TakahashiTani2016Collapse}]\label{fact:symmetric-in-qacf}
    Let $\mathcal{F} \coloneqq (f^{(n)})_{n \ge1}$, where each $f^{(n)}:\{0,1\}^n \to \{0,1\}$ is symmetric. Then $\mathcal{F} \in \mathsf{QAC}^0_{\mathrm{f}}$.
\end{fact}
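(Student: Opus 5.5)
Since this is a cited fact, I would reconstruct it by reducing symmetric functions to exact modular counting, which fanout provides, and then assembling the result with unbounded fan-in $\mathtt{AND}/\mathtt{OR}$ gates. Fix $n$ and let $p_1<p_2<\cdots<p_t$ be the first $t$ primes with $\prod_i p_i>n$. By the prime number theorem, $t=O(\log n/\log\log n)$ and every $p_i=O(\log n)$. By the Chinese remainder theorem, $|x|\in\{0,\ldots,n\}$ is determined by the residues $(|x|\bmod p_i)_{i\le t}$. Hence for each $w$,
$$
\mathtt{EXACT}^n_w(x)=\bigwedge_{i=1}^t\mathbf{1}\bigl[|x|\equiv w \pmod{p_i}\bigr],
\qquad
f(x)=\bigvee_{w:\,f_w=1}\mathtt{EXACT}^n_w(x).
$$
This expression uses only $O(n\log n)$ modular predicates $\mathbf{1}[|x|\equiv r\pmod q]$ with $q=O(\log n)$, combined in two further layers of $\mathtt{TOFFOLI}$ gates (an $\mathtt{OR}$ is a Toffoli conjugated by $X$ gates).

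The first key step is to compute each predicate $\mathbf{1}[|x|\equiv r\pmod q]$ exactly in $\mathsf{QAC}^0_{\mathrm f}$. I would follow Moore and Green--Homer--Moore--Pollett. Encode a $\mathbb{Z}_q$-register in $O(\log q)$ qubits, prepared in the Fourier state $\ket{\widehat{1}}$. On this state the increment operator acts diagonally as $\omega^{-1}$, where $\omega=e^{2\pi i/q}$. Use $\mathtt{FANOUT}_n$ to produce $n$ copies of the register, entangled so that applying controlled increments on copy $i$ controlled by $x_i$ is equivalent to a single register being incremented $|x|$ times. Since these controlled increments commute and are diagonal in the Fourier basis, they all act in one layer. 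Each controlled increment on $O(\log\log n)$ qubits is itself a constant-depth $\mathsf{QAC}$ circuit, since any function of $O(\log\log n)$ bits is a polylog-size DNF of Toffolis. Uncomputing the copies with a second fanout leaves the phase $\omega^{-|x|}$ (kicked back to a control prepared in $\ket{+}$), or equivalently the value $|x|\bmod q$ in a register. The comparison with $r$ is then a function of $O(\log\log n)$ bits, computed with polynomially many Toffolis. I would also make heavy use of the fact that fanout lets us copy each input bit polynomially many times, so that all $O(n\log n)$ predicates can be evaluated in parallel on disjoint copies of the inputs.

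The main obstacle is exactness and cleanliness of the modular step. Phase kickback with a single qubit only yields amplitudes like $\tfrac{1+\omega^{|x|-r}}{2}$, which are not $\{0,1\}$-valued when $q>2$. To fix this, I would keep the full $q$-dimensional Fourier register: prepare $\tfrac{1}{\sqrt q}\sum_{k}\ket{k}$ and apply the $k$-controlled phase $\omega^{k(|x|-r)}$. Here the controlled phase is realized by fanning out each of the $O(\log q)$ bits of $k$ to all $n$ inputs, so that the phases $\omega^{k x_i}$ are applied in parallel. Then apply the inverse Fourier transform on $\mathbb{Z}_q$. Since $q=O(\log n)$, this transform is a unitary on $O(\log\log n)$ qubits and can be compiled into a constant-depth circuit of polynomial size. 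This maps the register deterministically to $\ket{(|x|-r)\bmod q}$, so the residue is obtained exactly. A final comparison sets the output bit, and running the circuit in reverse cleans the ancillae in the usual compute--copy--uncompute manner. The remaining assembly (CRT conjunctions, the final disjunction, and uncomputation) is routine and keeps constant depth and polynomial size. This gives $\mathcal F\in\mathsf{QAC}^0_{\mathrm f}$.
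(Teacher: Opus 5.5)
The gap is in the step you yourself identify as the main obstacle: making the modular test exact for odd $q$. Your circuit needs two operations on a $b=\lceil\log_2 q\rceil$-qubit register. The first is exact preparation of $\frac{1}{\sqrt q}\sum_{k<q}\ket{k}$, which is not $H^{\otimes b}\ket{0^b}$ because $q$ is not a power of two. The second is the exact inverse Fourier transform $F_q^{-1}$ over $\mathbb{Z}_q$. Your first, register-copying version has the same dependence, since it needs $F_q$ to prepare $\ket{\widehat{1}}$ and to read out the residue.

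Your only justification is that a unitary on $O(\log\log n)$ qubits ``can be compiled into a constant-depth circuit of polynomial size,'' and that does not follow. Standard exact synthesis (two-level or cosine--sine decompositions) gives $2^{\Theta(b)}$ gates applied sequentially. That is polylogarithmic size, but also polylogarithmic depth, not constant depth. The copy-and-DNF trick you use elsewhere parallelizes only classical reversible maps and diagonal phase gates, where each output bit or phase can be computed independently on its own copy of the input; $F_q$ is neither. The literature does not supply this step either: \cite{HoyerSpalek2005QuantumFanout} only \emph{approximates} Fourier transforms with non-power-of-two moduli in constant depth. So your test for $\mathbf{1}[|x|\equiv r \pmod{q}]$ is not shown to be exact in constant depth, and the CRT assembly built on it inherits the gap.

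The paper only cites this fact. A clean repair, essentially the standard exact-weight test, drops odd moduli and CRT altogether. It is the power-of-two version of your own kickback idea, with the inverse transform replaced by Hadamards and one $\mathtt{OR}$.
\begin{itemize}
\item Let $d\coloneqq|x|-w$ and take $j=1,\ldots,m$ with $2^{m-1}>n$. Put an ancilla $a_j$ in $\ket{+}$ and apply the phase $e^{2\pi i d/2^j}$ controlled by $a_j$.
\item This phase factors into a fixed phase on $a_j$ and two-qubit controlled phases on the pairs $(a_j,x_i)$. These commute, so they can all be applied in one layer after fanning out $a_j$ and copying each $x_i$.
\item Applying $H$ to $a_j$ leaves $\frac{1+e^{i\theta_j}}{2}\ket{0}+\frac{1-e^{i\theta_j}}{2}\ket{1}$, where $\theta_j=2\pi d/2^j$.
\item If $d=0$, every $a_j$ is $\ket{0}$. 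If $d\neq 0$ and $2^{j-1}$ exactly divides $d$, then $e^{i\theta_j}=-1$, so $a_j$ is exactly $\ket{1}$.
\end{itemize}
Hence a single $\mathtt{OR}$ of $a_1,\ldots,a_m$ outputs $\mathbf{1}[|x|\neq w]$ deterministically. Compute--copy--uncompute is therefore clean, and $f=\bigvee_{w:f_w=1}\mathtt{EXACT}^n_w$ follows exactly as in your assembly.

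If you want to keep prime moduli, you never need the full $F_q^{-1}$. Un-preparing $\frac{1}{\sqrt q}\sum_k\ket{k}$ and testing for all zeros decides $s\equiv 0\pmod{q}$ exactly. That state can itself be prepared exactly in constant depth by one round of amplitude amplification, with the success probability tuned to exactly $1/4$. This is an additional idea that your proposal does not contain.
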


\begin{fact}[\cite{GrierMorrisWu2026QACContainsTC}]\label{fact:polylog-fanout}
    Let $k=k(n)$ satisfy $k=O(\log^d n)$ for some constant $d>0$. Then
    $\mathtt{FANOUT}_k\in\mathsf{QAC}^0$.
\end{fact}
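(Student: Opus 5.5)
The statement above is quoted from Grier, Morris, and Wu, so we may simply cite it. If we want a self-contained argument, the plan below reduces $\mathtt{FANOUT}_k$ for $k=O(\log^d n)$ to constant-depth, polynomial-size evaluation of $\mathtt{PARITY}_k$. I have not checked every step, and the key one is left open.

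\emph{Step 1: reduce to parity.} By the Hadamard-conjugation equivalence of Moore, it suffices to implement the reversible gate $U_{\mathtt{PARITY}_k}$ cleanly on any $k$ designated qubits. Three standard facts help here:
\begin{itemize}
\item a parity gate on $k$ bits can be assembled from a constant-depth tree of parity gates on $k^{1/d'}$ bits each;
\item each gate in the tree can be uncomputed by running it in reverse;
\item this only costs a constant factor in depth, so it is enough to handle $k'=O(\log n)$ bits at the base.
\end{itemize}
With $k'=O(\log n)$, a truth-table construction over all $2^{k'}=\mathsf{poly}(n)$ strings is affordable in size.

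\emph{Step 2: parity of $O(\log n)$ bits by exact matching.} Write $(-1)^{\mathtt{PARITY}(x)}$ as a sum over the odd-weight patterns $y\in\{0,1\}^{k'}$ of the predicates $\mathbf{1}[x=y]$. Exactly one of these predicates fires. Each predicate is a single $\mathtt{TOFFOLI}_{k'}$ sandwiched between $X$ gates. Parity is then the XOR of polynomially many disjoint indicators, and we get it by a multi-target parity into the output qubit, with phase kickback onto a $\ket{-}$ ancilla. Since the indicators are pairwise exclusive, this XOR equals their $\mathtt{OR}$, so one large Toffoli with negated controls produces the output. The whole computation can then be run in reverse to clean the ancillae.

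\emph{Main obstacle.} The hard part, which I expect to dominate, is that every input qubit $x_i$ must control $2^{k'}=\mathsf{poly}(n)$ Toffoli gates, and those gates must act on pairwise disjoint qubits within a layer. A naive solution needs fanout of each input, which is exactly the resource we are trying to build. My first attempt would avoid copying the inputs entirely. I would prepare, in constant depth, the $\mathsf{poly}(n)$-qubit one-hot encoding $\ket{e_x}$ of the $k'$-bit input. The idea is to do this by a sequence of controlled ``splits'' in which each $x_i$ acts only on $O(1)$ gates, using multi-qubit Toffolis whose control sets are blocks of the one-hot register rather than copies of $x_i$. The parity then becomes a sum of single-qubit $Z$ phases on the one-hot register, followed by uncomputation.

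If this one-hot preparation cannot be done in constant depth, the fallback is to follow Grier, Morris, and Wu more closely. There I would build the copies of each $x_i$ only in a superposition that is later disentangled, using the fact that $O(\log n)$-qubit states can be prepared exactly by circuits whose size is polynomial in $n$.
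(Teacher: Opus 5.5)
Citing the result is exactly what the paper does: it gives no proof of this statement and imports it as a Fact from Grier, Morris, and Wu, so your opening sentence is a complete justification here. Your self-contained plan, however, is not a proof, and the step you leave open is the whole content of the result. Your reductions are sound as far as they go. Hadamard conjugation turns $\mathtt{FANOUT}_k$ into a clean $U_{\mathtt{PARITY}_k}$, and a block tree of constant depth $D$ reduces to parity gates on $k'$ bits. But $(k')^D\ge k$ forces $k'\ge k^{1/D}$, which is superconstant, so the difficulty is not pushed down to constant size. Step~2 then needs every $x_i$ to control $2^{k'-1}$ Toffolis in parallel, i.e.\ fanout of polynomial size.

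Your proposed workaround cannot work: it prepares the one-hot register $\ket{e_x}$ in constant depth by controlled splits built from Toffolis. In any circuit of $X$ and Toffoli gates, each qubit lies in at most one gate per layer and each gate has a single target. Hence the set of wires whose computational-basis value depends on $x_i$ at most doubles per layer, and after depth $D$ at most $2^D$ wires depend on $x_i$. Every one of the $2^{k'}$ wires of $\ket{e_x}$ depends on $x_i$, so $D\ge k'$. The same count shows that no classical reversible constant-depth circuit implements superconstant fanout. A genuinely quantum ingredient (Hadamards, phase kickback) is therefore unavoidable, and supplying it is precisely what Grier, Morris, and Wu do.

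Your fallback does not fill this in either. Exact poly-size preparation of $O(\log n)$-qubit states is standard, but it says nothing about depth. A constant-depth version would include the $k'$-qubit cat state, and a clean constant-depth preparation $V$ of that state already yields a clean $U_{\mathtt{PARITY}_{k'}}$. To see this, apply $CZ$ between $x_i$ and the $i$th cat qubit, then apply $V^\dagger$. Next, write the all-zeros indicator $1\oplus\mathtt{PARITY}_{k'}(x)$ into the target with one Toffoli, and undo the first two steps. So the fact you would rely on is essentially the statement itself at the base size.
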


\begin{fact}[\cite{GrierMorrisWu2026QACContainsTC}]\label{fact:exact-from-fanout}
    Let $k=k(n)$ satisfy $0\le k(n)\le n/2$. Then
    $\mathtt{EXACT}^n_k\in
    \mathsf{QAC}^0[\mathtt{FANOUT}_k]$.
\end{fact}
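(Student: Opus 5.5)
Since this Fact is quoted from Grier, Morris, and Wu~\cite{GrierMorrisWu2026QACContainsTC}, I would reconstruct their argument rather than find a new one. I give the plan with $\mathtt{FANOUT}_k$ as the only nontrivial resource, assuming $k\ge 1$ and $k$ larger than any fixed polylogarithm. The trivial and polylogarithmic cases are covered by Fact~\ref{fact:polylog-fanout}, since then $\mathtt{FANOUT}_k$ is already free. I have not checked every step, and the main gap is flagged below.

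\emph{Step 1 (weight threshold).} I would write $\mathtt{EXACT}^n_k=\mathtt{THRESHOLD}^n_k\wedge\neg\mathtt{THRESHOLD}^n_{k+1}$. It then suffices to decide, cleanly and exactly, whether $|x|\ge j$ for $j\in\{k,k+1\}$, and to combine the two bits with one Toffoli and uncompute. To make the weight small, I would use a perfect-hashing step. Fix an explicit family $\mathcal H$ of $\mathsf{poly}(n)$ functions $h:[n]\to[r]$ with $r=O(k^2)$, such that every set of at most $k+1$ coordinates is mapped injectively by some $h\in\mathcal H$. For each $h$ and each bucket $b$, one $\mathtt{OR}$ gate computes $y^h_b=\bigvee_{h(i)=b}x_i$. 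Here $\mathtt{OR}$ is a Toffoli conjugated by $X$ gates.

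The inputs must feed many $\mathtt{OR}$s in parallel, so each $x_i$ first needs many copies. I would obtain these copies cheaply: each bucket needs only an $\mathtt{OR}$, and $\mathtt{OR}$ copies are available through the standard $\mathsf{QAC}^0$ trick of computing $\mathtt{OR}$ into a fresh ancilla per use. The fact that fanout of a classical bit is needed only in a limited way is what I would have to justify, following~\cite{GrierMorrisWu2026QACContainsTC}. Then $|x|\ge j$ holds iff some $h$ has at least $j$ occupied buckets, because the injective $h$ realises the full count and no $h$ overcounts.

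\emph{Step 2 (counting occupied buckets with $\mathtt{FANOUT}_k$).} Fix $h$. The task is now to decide whether $\sum_b y^h_b\ge j$ over only $r=O(k^2)$ bits. This is where $\mathtt{FANOUT}_k$ is used. I would put a counter register of $O(\log k)$ qubits in uniform superposition over $t\in\mathbb Z_q$ with $q>r$. The goal is to apply the phase $e^{2\pi i t\sum_b y_b/q}$ and then an inverse QFT over $\mathbb Z_q$, which reads out $\sum_b y_b$ exactly.

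Applying the phase requires each qubit of the counter to act as a control for every occupied bucket. I would split the $r$ buckets into $O(k)$ groups of size $k$ and fan out each counter qubit with $\mathtt{FANOUT}_k$, in a constant-depth tree of $\mathtt{FANOUT}_k$ gates since $r=O(k^2)$. I would then apply the controlled single-bit phases in parallel and undo the fanout. The QFT over $\mathbb Z_q$ on $O(\log k)$ qubits is itself a polylogarithmic-size object. By Fact~\ref{fact:polylog-fanout} together with exact-gate-set arguments, I would take it to be exactly implementable in $\mathsf{QAC}^0$.

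\emph{Step 3 (combining and uncomputing).} I would take the $\mathtt{OR}$ over $h\in\mathcal H$ of the threshold bits. The final answer is the Toffoli of the two threshold outcomes. Finally I would uncompute everything in reverse to leave ancillae clean.

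\emph{Main obstacle.} I expect the main obstacle to be Step 1's copying of input bits, and Step 2's exactness. The inverse QFT must be exact, not approximate, over a modulus that is not a power of two. This can be avoided by choosing $q$ to be a power of two greater than $r$. Even then, the phases $2\pi/q$ need exact single-qubit rotations, which is allowed because the gate set has arbitrary single-qubit gates. The copying issue I would resolve as Grier, Morris, and Wu do, by noting that each $x_i$ only needs to feed $\mathtt{OR}$ gates. An alternative is to restructure the hashing so that each $x_i$ is used only $O(1)$ times per layer, across $\mathsf{poly}(n)$ parallel independent copies of the input obtained at the top level.
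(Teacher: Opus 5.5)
Your decomposition matches the paper's: $\mathtt{EXACT}^n_k=\mathtt{THRESHOLD}^n_k\wedge\neg\mathtt{THRESHOLD}^n_{k+1}$, then perfect hashing, one $\mathtt{OR}$ per bucket, and a count of occupied buckets. Your correctness argument for the hashing is also fine. The gap is the copying of input bits.

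Every $x_i$ controls one $\mathtt{OR}$ gate per $h\in\mathcal H$, and gates in a layer act on disjoint qubits, so each input bit needs $|\mathcal H|$ copies. There is no ``standard $\mathsf{QAC}^0$ trick'' that supplies them. Computing an $\mathtt{OR}$ into a fresh ancilla per use still uses $x_i$ as a control each time, which costs depth $|\mathcal H|$. Producing $t$ copies of a bit (equivalently, $t$ copies of the $\mathtt{OR}$ of a single bit) is exactly $\mathtt{FANOUT}_t$. Your alternative, taking $\mathsf{poly}(n)$ copies of the input at the top level, is circular. From $\mathtt{FANOUT}_k$ together with Fact~\ref{fact:polylog-fanout}, a constant-depth tree yields only $k^{O(1)}\cdot\mathrm{polylog}(n)$ copies, which is $n^{o(1)}$ for, say, $k=2^{\sqrt{\log n}}$. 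Assuming $k$ exceeds every polylogarithm does not change this. So with an arbitrary explicit family of $\mathsf{poly}(n)$ hash functions, your circuit uses more fanout than the statement grants.

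The missing idea is to bound the size of the hash family in terms of $k$. The paper uses residues modulo $m$ for $m\in\{k,\dots,L\}$ with $L=O(k^2\log n)$ (Fact~\ref{fact:number-theoretic}). Each $x_i$ lies in one class per modulus, so it needs only $L$ copies. $\mathtt{FANOUT}_L$ comes from a two-level tree of $\mathtt{FANOUT}_k$ gates followed by $O(\log n)$-size fanout from Fact~\ref{fact:polylog-fanout}. A family of $O(k\log n)$ random functions into $[O(k^2)]$, which exists by a union bound, would also repair your version.

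Separately, your Step~2 needs an exact constant-depth inverse QFT, which Fact~\ref{fact:polylog-fanout} does not provide. The constant-depth QFT constructions with fanout in \cite{HoyerSpalek2005QuantumFanout} are only approximate. The paper avoids this by applying Fact~\ref{fact:symmetric-in-qacf} to the $m\le L$ bucket bits for each modulus. That needs only fanout of size at most $L$, which is available.
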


The proof of Fact~\ref{fact:exact-from-fanout} is provided in Appendix~\ref{app:exact-from-fanout} for completeness.

\section{The reductions}

We prove Theorem~\ref{thm:intro-main} by giving explicit $\mathsf{QAC}^0$ reductions in both directions. We first show that $\mathtt{FANOUT}_k$ suffices to compute every symmetric function with transition radius $k$. We then show conversely that any circuit computing such a function with worst-case probability at least $1/2+1/\mathsf{polylog}(n)$ can be transformed into a circuit implementing $\mathtt{FANOUT}_k$. Finally, we derive the two corollaries stated in the introduction.

\subsection{From fanout to symmetric functions}

We begin with the reduction from fanout to symmetric-function computation. The proof is a parameterized version of the exact-weight decomposition used by Grier, Morris, and Wu in their proof of Corollary~16~\cite{GrierMorrisWu2026QACContainsTC}.

\begin{theorem}\label{thm:upper-bound}
    Let $f : \{0,1\}^n \to \{0,1\}$ be a symmetric function with $\rho(f) = k$. Then
    $$
    f \in \mathsf{QAC}^0[\mathtt{FANOUT}_k].
    $$
\end{theorem}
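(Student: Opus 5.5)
The plan is to decompose $f$ into a constant number of layers built from exact-weight predicates, each computable with $\mathtt{FANOUT}_k$ by Fact~\ref{fact:exact-from-fanout}. The constant case $k=0$ is trivial, so assume $k\ge1$.

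First, the definition of $\rho(f)=k$ says every transition $i\in\Delta_f$ satisfies $\min\{i+1,n-i\}\le k$. Hence either $i\le k-1$ or $i\ge n-k$. Therefore $f$ is constant on the middle interval of weights $k\le |x|\le n-k$, which is nonempty when $k\le n/2$; note $\rho(f)\le\lceil n/2\rceil$ always. Let $b$ be this constant value. Then
$$
f(x)=b\oplus \bigvee_{j\in J}\mathtt{EXACT}^n_j(x),\qquad J\coloneqq\{j: j<k \text{ or } j>n-k,\ f_j\ne b\},
$$
with $|J|\le 2k$. When $n$ is odd and $k=\lceil n/2\rceil$, the middle interval is empty. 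In that degenerate case I would handle $f$ directly as the OR over the weights $j$ with $f_j=1$, each of which satisfies $j<k$ or $j>n-k$ as well.

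Second, I need each $\mathtt{EXACT}^n_j$ for $j\in J$ in $\mathsf{QAC}^0[\mathtt{FANOUT}_k]$. For $j\le k-1\le n/2$, Fact~\ref{fact:exact-from-fanout} gives $\mathtt{EXACT}^n_j\in\mathsf{QAC}^0[\mathtt{FANOUT}_j]$. A $\mathtt{FANOUT}_j$ gate is a $\mathtt{FANOUT}_k$ gate with $k-j$ targets on dummy zero ancillae, whose effect is undone by a second application. For $j>n-k$, we have $\mathtt{EXACT}^n_j(x)=\mathtt{EXACT}^n_{n-j}(\bar x)$ with $n-j<k$, so we apply $X$ gates to the inputs and reduce to the previous case.

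The main obstacle is parallelism. All $|J|\le 2k$ subcircuits read the same $n$ input qubits, and without fanout they cannot run simultaneously on shared inputs. I would first copy the input register $2k$ times using $\mathtt{FANOUT}_k$ gates. Each input bit needs $|J|\le 2k$ copies, which takes two $\mathtt{FANOUT}_k$ gates per bit (or a constant-depth tree), applied in parallel across bits, giving $O(n k)$ ancillae and constant depth. Then:
\begin{enumerate}
\item run the $|J|$ exact-weight circuits in parallel on disjoint copies, writing each answer onto its own ancilla;
\item apply one $\mathtt{TOFFOLI}$-based $\mathtt{OR}$ gate, using $X$ conjugation, onto the output qubit, followed by $X$ if $b=1$;
\item uncompute the exact-weight circuits and the copies in reverse order, so that the ancillae are clean.
\end{enumerate}
If the circuits from Fact~\ref{fact:exact-from-fanout} are already clean reversible gates $U_{\mathtt{EXACT}_j}$, the uncomputation is immediate. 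Otherwise, I would use the standard compute, copy, uncompute pattern. Depth stays constant and ancilla count polynomial, so $f\in\mathsf{QAC}^0[\mathtt{FANOUT}_k]$.
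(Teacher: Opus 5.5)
Your proposal is correct and follows the paper's proof essentially step for step: $f$ is constant on the middle weights, so up to an output negation it is an $\mathtt{OR}$ of at most $2k$ predicates $\mathtt{EXACT}_j$ with $j<k$ or $j>n-k$ (the latter handled via $\mathtt{EXACT}_j(x)=\mathtt{EXACT}_{n-j}(\neg x)$), evaluated in parallel on input copies made by two layers of $\mathtt{FANOUT}_k$ and combined by a single final $\mathtt{OR}$. Your separate odd-$n$ case and your uncomputation step are harmless but unnecessary, since the paper's index set $\{0,\ldots,k-1\}\cup\{n-k+1,\ldots,n\}$ already covers every weight in that case and exact computation only constrains the measured output qubit.

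The one point to tighten is simulating $\mathtt{FANOUT}_j$ by $\mathtt{FANOUT}_k$, which the paper only asserts. As you phrase it, zero-initialized surplus targets would hold a copy of the control, and a second application on the same qubits would also undo the $j$ real targets. Instead, prepare the $k-j$ surplus targets in $\ket{+}$, an $X$-eigenstate that $\mathtt{FANOUT}_k$ leaves untouched. Alternatively, fan out onto fresh ancillae, copy into the real targets with $\mathtt{CNOT}$s, and re-apply $\mathtt{FANOUT}_k$.
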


\begin{proof}
    The case $k=0$ is immediate, so assume that $k\ge1$. Since $\rho(f)=k$, the values $f_k,f_{k+1},\ldots,f_{n-k}$ are all equal. By negating the output if necessary, we may assume these values are all zero. Then $f$ can be expressed as
    $$
    f(x)=\bigvee_{j\in S}\mathtt{EXACT}_{j}(x)
    $$
    for some
    $$
    S \subseteq \{0,\ldots,k-1\} \cup \{n-k+1,\ldots,n\}.
    $$
    In particular, $|S| \le 2k$. By Fact~\ref{fact:exact-from-fanout}, we know that $\mathtt{EXACT}_{j} \in \mathsf{QAC}^0[\mathtt{FANOUT}_k]$ for all $j < k$, since $\mathtt{FANOUT}_j \in \mathsf{QAC}^0[\mathtt{FANOUT}_k]$. For $j > n-k$, we use 
    $$
    \mathtt{EXACT}_{j}(x)=\mathtt{EXACT}_{n-j}(\neg x),
    $$
    which reduces to an exact-weight function with $n-j<k$. Hence it suffices to make $2k$ copies of the input string $x$ using two parallel layers of $\mathtt{FANOUT}_k$, then compute the corresponding $\mathtt{EXACT}_{j}(x)$ in parallel, and finally combine their outputs with a single $\mathtt{OR}$ gate to compute $f$. The resulting circuit has constant depth and polynomial size and computes $f$ exactly.
\end{proof}

\subsection{From symmetric functions to fanout}

We next prove the reverse reduction.

\begin{lemma}\label{lem:real-valued-central-fourier-tail}
    Let $k\ge1$ be an integer, set $m=2k-1$, and let $\gamma\in(0,1]$. Suppose that $h : \{0,1\}^m \to \{\pm1\}$ is symmetric with $\rho(h) = k$, and that $g:\{0,1\}^m \to [-1,1]$ satisfies
    $$
    g(x)h(x) \ge \gamma
    \qquad
    \text{for every }x\in\{0,1\}^m.
    $$
    Then, for $\ell = \left\lfloor \gamma^2\sqrt{m/8} \right\rfloor + 1$,
    $$
    \mathbf{W}^{\ge \ell}[g] \ge \frac{\gamma^2}{\sqrt{8m}}.
    $$
\end{lemma}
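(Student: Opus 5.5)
Since $m=2k-1$, the condition $\rho(h)=k$ forces a transition at index $i=k-1$, because $\min\{i+1,m-i\}=k$ holds only for $i=k-1$. Thus $h$ takes opposite signs on the Hamming layers $k-1$ and $k$. The plan is to lower-bound the total influence of $g$ using the edges between these two central layers, convert this into a level-weighted Fourier bound via Fact~\ref{fact:influence-fourier}, and then cut at level $\ell$ using Parseval.

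\textbf{Step 1 (influence).} Take any edge $(x,x\oplus e_j)$ with $|x|=k-1$ and $|x\oplus e_j|=k$. The values $h(x)$ and $h(x\oplus e_j)$ have opposite signs, and $gh\ge\gamma$, so $|g(x)-g(x\oplus e_j)|\ge 2\gamma$. Each such edge is counted twice in $\sum_j\mathbb{E}_x[(g(x)-g(x\oplus e_j))^2]$, once from each endpoint. The number of these edges is $\binom{m}{k-1}\cdot(m-k+1)=\binom{m}{k-1}k$. This gives
$$
\mathbf{I}[g]\ \ge\ \frac14\cdot\frac{2}{2^m}\binom{m}{k-1}k\,(2\gamma)^2=\frac{2\gamma^2 k}{2^m}\binom{m}{k-1}.
$$

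\textbf{Step 2 (central binomial estimate).} For $m=2k-1$, $\binom{m}{k-1}=\binom{m}{k}$ is a central binomial coefficient. I would use the standard bound $\binom{2k-1}{k}=\frac12\binom{2k}{k}\ge \frac{4^k}{4\sqrt{k}}$, which follows from $\binom{2k}{k}\ge 4^k/(2\sqrt{k})$. Substituting $2^m=4^k/2$ yields
$$
\mathbf{I}[g]\ge 2\gamma^2 k\cdot\frac{1}{2\sqrt k}=\gamma^2\sqrt{k}\ge \gamma^2\sqrt{m/2}.
$$
The last step uses $k=(m+1)/2\ge m/2$. Matching the exact constants $\sqrt{m/8}$ and $1/\sqrt{8m}$ in the statement is the only delicate bookkeeping, and the estimate may need to be tightened to $\binom{2k}{k}\ge 4^k/\sqrt{4k}$ (or slightly weakened) to land exactly on those constants. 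I expect some slack, since the target constants look like they came from a cruder bound such as $\mathbf{I}[g]\ge\gamma^2\sqrt{m/2}$.

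\textbf{Step 3 (tail extraction).} By Fact~\ref{fact:influence-fourier}, $\mathbf{I}[g]=\sum_d d\,\mathbf{W}^d[g]$. Parseval together with $|g|\le1$ gives $\sum_d\mathbf{W}^d[g]\le1$. Split the sum at $\ell$:
$$
\sum_{d<\ell} d\,\mathbf{W}^d[g]\le (\ell-1)\le \gamma^2\sqrt{m/8},
\qquad
\sum_{d\ge\ell} d\,\mathbf{W}^d[g]\le m\,\mathbf{W}^{\ge\ell}[g].
$$
Hence
$$
m\,\mathbf{W}^{\ge\ell}[g]\ge \gamma^2\sqrt{m/2}-\gamma^2\sqrt{m/8}=\gamma^2\sqrt{m/8},
$$
using $\sqrt{m/2}=2\sqrt{m/8}$. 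This gives $\mathbf{W}^{\ge\ell}[g]\ge\gamma^2/\sqrt{8m}$, as claimed.

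The main obstacle is Step 2: securing a clean lower bound on the central binomial coefficient that holds for all $k\ge1$ with the right constant. I would invoke the elementary inequality $\binom{2k}{k}\ge 4^k/(2\sqrt{k})$, which is provable by induction, and check the small cases of $k$ directly.
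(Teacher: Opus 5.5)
Your proof is correct and follows essentially the same route as the paper: the $2\gamma$ jumps across edges between Hamming layers $k-1$ and $k$ give $\mathbf{I}[g]\ge\gamma^2\sqrt{m/2}$ via a central binomial estimate, and splitting $\mathbf{I}[g]=\sum_d d\,\mathbf{W}^d[g]$ at level $\ell$ with Parseval yields the tail bound. Your hedging in Step~2 is unnecessary, since your computation $\mathbf{I}[g]\ge\gamma^2\sqrt{k}\ge\gamma^2\sqrt{m/2}$ already hits the required constant exactly (note that $4^k/\sqrt{4k}$ is the same bound as $4^k/(2\sqrt{k})$), and working with $\binom{2k}{k}$ instead of the paper's $\binom{2r}{r}$ with $r=k-1$ even avoids the paper's separate $m=1$ case.
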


\begin{proof}
    Since $m=2k-1$, the assumption $\rho(h)=k$ implies that
    $k-1\in\Delta_h$. Fix $j\in[m]$. For every
    $x\in\{0,1\}^m$ with $|x|=k-1$ and $x_j=0$, set
    $y=x\oplus e_j$. Then $|y|=k$ and $h(y)=-h(x)$. Hence
    $$
    |g(x)-g(y)| = |h(x)\bigl(g(x)-g(y)\bigr)| = |g(x)h(x)+g(y)h(y)| \ge 2\gamma.
    $$

    By the definition of influence,
    $$
    \begin{aligned}
    \mathbf{Inf}_j[g]
    &=
    \frac{1}{2^{m+2}}
    \sum_{z\in\{0,1\}^m}
    \bigl(g(z)-g(z\oplus e_j)\bigr)^2\\
    &\ge
    \frac{1}{2^{m+2}}
    \sum_{\substack{x\in\{0,1\}^m\\|x|=k-1,\ x_j=0}}
    \left[
    \bigl(g(x)-g(x\oplus e_j)\bigr)^2
    +
    \bigl(g(x\oplus e_j)-g(x)\bigr)^2
    \right]\\
    &=
    \frac{1}{2^{m+1}}
    \sum_{\substack{x\in\{0,1\}^m\\|x|=k-1,\ x_j=0}}
    \bigl(g(x)-g(x\oplus e_j)\bigr)^2\\
    &\ge
    \gamma^2\frac{\binom{m-1}{k-1}}{2^{m-1}}.
    \end{aligned}
    $$
    Summing over $j\in[m]$ yields
    $$
    \mathbf{I}[g]
    \ge
    \gamma^2
    \frac{m\binom{m-1}{k-1}}{2^{m-1}}.
    $$

    If $m=1$, the right-hand side is $\gamma^2$, which is at least
    $\gamma^2\sqrt{m/2}$. Suppose now that $m\ge3$, and write
    $r=k-1=(m-1)/2$. Stirling's formula gives
    $$
    \frac{\binom{2r}{r}}{4^r}
    \ge
    \frac{1}{2\sqrt r}.
    $$
    It follows that
    $$
    \frac{m\binom{m-1}{k-1}}{2^{m-1}}
    \ge
    \frac{m}{\sqrt{2(m-1)}}
    \ge
    \sqrt{\frac{m}{2}}.
    $$
    Thus,
    $$
    \mathbf{I}[g]
    \ge \gamma^2 \frac{m\binom{m-1}{k-1}}{2^{m-1}}
    \ge \gamma^2\sqrt{\frac{m}{2}}.
    $$

    Since $|g(x)|\le1$, Parseval's identity gives
    $$
    \sum_{d=0}^m\mathbf{W}^d[g]
    =\mathbb{E}_x[g(x)^2]
    \le1.
    $$
    Fact~\ref{fact:influence-fourier} therefore implies
    $$
    \begin{aligned}
    \mathbf{I}[g]
    &=
    \sum_{d<\ell}d\,\mathbf{W}^d[g]
    +
    \sum_{d\ge\ell}d\,\mathbf{W}^d[g]\\
    &\le
    (\ell-1)\sum_{d<\ell}\mathbf{W}^d[g]
    +m\mathbf{W}^{\ge\ell}[g]\\
    &\le
    (\ell-1)+m\mathbf{W}^{\ge\ell}[g].
    \end{aligned}
    $$
    Finally, $\ell-1\le\gamma^2\sqrt{m/8}$, and hence
    $$
    \begin{aligned}
    \mathbf{W}^{\ge\ell}[g]
    &\ge
    \frac{\gamma^2\sqrt{m/2}-\gamma^2\sqrt{m/8}}{m}\\
    &=
    \frac{\gamma^2}{\sqrt{8m}}.
    \end{aligned}
    $$
\end{proof}

We use the following Fourier-tail reduction of Gretta, Gupta, and Joshi~\cite[Corollary~4.4]{GrettaGuptaJoshi2026FourierConcentrated}.

\begin{fact}\label{fact:fourier-tail-to-fanout}
    Suppose that there is a depth-$d$, $a$-ancilla, single-output $\mathsf{QAC}$ circuit $C$ acting on $n$ inputs. If its associated real-valued function $g_C:\{0,1\}^n \to \mathbb{R}$ satisfies, for some $\eta>0$, $\delta>0$, and integer $\ell \ge n^\delta$,
    $$
    \mathbf{W}^{\ge \ell}[g_C] \ge \eta,
    $$
    then there is a depth-$O(d)$ $\mathsf{QAC}$ circuit $C'$ implementing $\mathtt{FANOUT}_\ell$. The number of ancillae is
    $$
    O\!\left((n+a)\log^2(\ell+1)/\eta\right).
    $$
\end{fact}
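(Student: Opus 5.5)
Since this is imported from Gretta, Gupta, and Joshi~\cite[Corollary~4.4]{GrettaGuptaJoshi2026FourierConcentrated}, the plan below reconstructs their route rather than giving a new argument. The idea is to turn high-degree Fourier mass of $g_C$ into a \emph{cat state} on many qubits, and then turn cat states into $\mathtt{FANOUT}_\ell$.

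\emph{Step 1: a controlled phase oracle.} Run $C$, apply $Z$ to the target $t$, then apply $C^{-1}$. On input $\ket{x}\ket{0^a}$, the component of the result that returns to $\ket{0^a}$ is exactly $(p_0(x)-p_1(x))\ket{x}=g_C(x)\ket{x}$. Make this oracle controlled by a qubit $c$. That is cheap: only the single $Z$ needs to be controlled, which is one Toffoli.

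\emph{Step 2: Fourier sampling.} Use Hadamard conjugation on the $n$ input qubits, as in Bernstein--Vazirani. Prepare $\ket{+}_c\ket{+^n}$, apply the controlled oracle, apply $H^{\otimes n}$, and project the ancillae onto $\ket{0^a}$. The joint state of $c$ and the inputs then has components proportional to $\ket{0}\ket{0^n}+\sum_S \widehat{g_C}(S)\,\ket{1}\ket{S}$. After a suitable rotation on $c$, the branch labelled by $S$ looks like $\ket{0}\ket{0^n}+\ket{1}\ket{1_S}$ in the coordinates of $S$. That is a cat state on $\{c\}\cup S$, once the coordinates outside $S$ are disentangled.

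Parseval and the hypothesis say this branch structure carries weight at least $\eta$ on sets with $|S|\ge\ell$. So each trial yields, with probability $\Omega(\eta)$, a cat state on at least $\ell+1$ qubits. Its support $S$ is known coherently in a register, so the construction must be made unitary and measurement-free.

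\emph{Step 3: amplification and selection.} Run $O(\log(\ell+1)/\eta)$ independent trials in parallel. Select one successful trial coherently, using $\mathsf{QAC}^0$ comparisons and a constant-depth routing of the chosen block's qubits onto $\ell$ fixed positions. Both the selection of some $\ell$ qubits of $S$ and the routing use only polylogarithmic-size fanout, which is available by Fact~\ref{fact:polylog-fanout}. This is where the factor $\log^2(\ell+1)$ in the ancilla count should come from. The result is a cat state on $\ell+1$ designated qubits, up to small error.

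\emph{Step 4: cat state to fanout.} Use the standard constant-depth conversion of Moore and Green et al.~\cite{Moore1999FanoutParityCounting,GreenHomerMoorePollett2002CountingFanout}. A CNOT from the control $b$ into one cat qubit, followed by parity bookkeeping, implements $\mathtt{FANOUT}_\ell$. The cat state is then uncomputed by running Steps 1--3 in reverse.

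\emph{Main obstacle: exactness and cleanliness.} Fact~\ref{fact:fourier-tail-to-fanout} promises a circuit that implements $\mathtt{FANOUT}_\ell$ exactly and returns the ancillae to $\ket{0}$. Steps 2 and 3 are inherently probabilistic, so we must avoid postselection. My first attempt would follow the GGJ device: use the high-weight sample only to \emph{locate} a good set $S$ classically in superposition. Then the cat state on the fixed target positions can be written deterministically through the selected register, and all garbage can be removed by reversing the sampling.

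The step I am least sure of is whether the failure branches can be made to contribute \emph{zero} error rather than inverse-polynomial error. If that cannot be done directly, the fallback is to obtain an approximate $\mathtt{FANOUT}_\ell$ with error $1/\mathsf{poly}$ and then invoke an exactification argument.
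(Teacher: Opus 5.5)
The paper does not prove this statement; it imports it verbatim from \cite[Corollary~4.4]{GrettaGuptaJoshi2026FourierConcentrated}, so there is no in-paper argument to match. Your reconstruction, however, has a genuine gap, and it is in Step~2.

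\textbf{Step 2 does not produce cat states.} After postselecting the ancillae, your state is $\ket{0}\ket{0^n}+\ket{1}\sum_S\widehat{g_C}(S)\ket{S}$. This is not a superposition of cat states on $\{c\}\cup S$ labelled by $S$. The single term $\ket{0}\ket{0^n}$ is shared by all $S$, and the only record of $S$ is the register that is supposed to carry the cat. Any attempt to learn $S$, by measuring or by coherently copying that register, attaches label $0^n$ to the $c=0$ branch and label $S$ to the $c=1$ branch, so it measures $c$. Hence the claims ``its support $S$ is known coherently in a register'' and ``each trial yields, with probability $\Omega(\eta)$, a cat state on at least $\ell+1$ qubits'' are false.

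Equivalently, after Hadamard conjugation your controlled oracle is the linear combination $\sum_S\widehat{g_C}(S)\,\Lambda(X^S)$ of fanouts onto different supports, where $\Lambda(\cdot)$ denotes $c$-controlled operations. Nothing in your plan correlates a $c$-independent record of $S$ with the term that actually acted. Supplying that correlation is the whole content of the reduction. The ``GGJ device'' in your last paragraph does not supply it either: knowing $S$ does not let you write a cat state without fanning out a qubit.

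Step~2 has a second defect. In the bounded-error case, $|g_C(x)|$ can be as small as $2\varepsilon$. Then $C^{-1}\Lambda(Z_t)C$ leaves ancilla garbage of norm $\sqrt{1-g_C(x)^2}$ on the $c=1$ branch only, so without postselection the controlled oracle already decoheres $c$. Step~1 also needs the inputs CNOT-copied into fresh qubits before running $C$. Since $C$ may act on its input register, the $\ket{0^a}$-block of $C^{-1}Z_tC$ is otherwise not diagonal.

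\textbf{Steps 3 and 4 do not repair this.} Coherently selecting $\ell$ elements of a superposed $S$ and routing them onto fixed positions is a compaction problem, and you give no construction. In constant depth, a data qubit passes through only $O(d)$ controlled-SWAPs, so it can reach only $2^{O(d)}$ fixed destinations. Copy-based routing instead needs each data qubit or selector bit to feed $\Omega(\ell)$ gates, which is fanout of the size you are building, not the polylogarithmic fanout of Fact~\ref{fact:polylog-fanout}.

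In Step~4, ``CNOT into one cat qubit followed by parity bookkeeping'' is the measurement-based trick; unitarily, the bookkeeping is itself a fanout. The measurement-free conversion works differently: kick the phase $(-1)^{|z|}$ onto the cat, invert the cat-preparation unitary, and test for all-zeros with one $\mathtt{OR}$ gate. That conversion needs the cat prepared exactly and cleanly by a unitary.

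Exactness is also a real gap, not a technicality. In this paper ``implements'' means exact and clean, and Theorem~\ref{thm:lower-bound} relies on this Fact to conclude exact $\mathtt{FANOUT}_k\in\mathsf{QAC}^0$. Your fallback of ``approximate, then exactify'' comes with no argument.
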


\begin{theorem}\label{thm:lower-bound}
    Let $f:\{0,1\}^n\to\{0,1\}$ be a symmetric function with $\rho(f)=k\ge1$. Suppose that a depth-$d$, $a$-ancilla $\mathsf{QAC}$ circuit $C$ computes $f$ with worst-case probability at least $1/2+\varepsilon$, where $\varepsilon\ge1/\log^c n$ for some constant $c>0$. Then $\mathtt{FANOUT}_k$ can be implemented by a depth-$O(d)$ $\mathsf{QAC}$ circuit using $(n+a)^{O(1)}$ ancillae.
\end{theorem}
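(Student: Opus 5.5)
We follow the route sketched in the technical overview. We split according to the size of $k$. Fix a constant $A>12c$; the reason for this choice appears below. If $k\le \log^A n$, Fact~\ref{fact:polylog-fanout} already gives $\mathtt{FANOUT}_k\in\mathsf{QAC}^0$, with depth and ancilla count independent of $C$. So assume $k>\log^A n$.

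\textbf{Reduction to the central restriction.} Choose $i\in\Delta_f$ attaining $\min\{i+1,n-i\}=k$. If $i+1=k$, the transition lies between weights $k-1$ and $k$. Otherwise $n-i=k$, and we conjugate all input qubits by $X$ gates. This costs no depth, and it turns $f$ into the symmetric function $x\mapsto f(\neg x)$, which has a transition between weights $k-1$ and $k$. Set $m=2k-1\le n$ and fix the last $n-m$ inputs to $0$. These are just ancilla-like constants, so the restricted circuit $C'$ has depth $d$ and $a+n$ qubits of workspace. It computes the restriction $f'$ of $f$ with the same worst-case guarantee. The function $f'$ is symmetric on $m$ bits and has the transition $k-1\in\Delta_{f'}$, and any transition of a symmetric function on $2k-1$ bits has radius at most $k$. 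Hence $\rho(f')=k$. Put $h=(-1)^{f'}$ and $g=g_{C'}$. Then $g(x)h(x)\ge 2\varepsilon$ for all $x$, so Lemma~\ref{lem:real-valued-central-fourier-tail} applies with $\gamma=2\varepsilon$. It yields $\mathbf{W}^{\ge\ell}[g]\ge 4\varepsilon^2/\sqrt{8m}$ with $\ell=\lfloor 4\varepsilon^2\sqrt{m/8}\rfloor+1$.

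\textbf{Applying the Fourier-tail reduction.} We now check the hypotheses of Fact~\ref{fact:fourier-tail-to-fanout} for $C'$ on $m$ inputs. With $\varepsilon\ge 1/\log^c n$ and $m\ge k>\log^A n$, we have $\ell\gtrsim m^{1/2}/\log^{2c}n$. Since $\log^{2c}n< m^{2c/A}<m^{1/6}$ by $A>12c$, this gives $\ell\ge m^{1/3}$ for large $n$. So $\delta=1/3$ works, after handling the constant factor by slightly enlarging $A$. Also $\eta=4\varepsilon^2/\sqrt{8m}\ge 1/\mathrm{poly}(n)$. The Fact therefore gives a depth-$O(d)$ circuit implementing $\mathtt{FANOUT}_\ell$ with $O((n+a)\log^2 n/\eta)=(n+a)^{O(1)}$ ancillae.

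\textbf{Amplifying to $\mathtt{FANOUT}_k$.} Since $\ell^3\ge m\ge k$, we build a three-level tree. First, $\mathtt{FANOUT}_\ell$ copies the control into $\ell$ fresh ancillas. Next, each ancilla drives a $\mathtt{FANOUT}_\ell$, applied in parallel. Then each of those drives another. This writes $b$ onto the $k$ targets; targets not reached are padded or handled by truncating the tree. The XOR structure is the one thing to watch. Fanning out into fresh zero ancillas copies $b$ exactly, and only the last level XORs into the real targets $x_j$. We then uncompute the two internal levels in reverse order to return the ancillas to $0$, so the implementation is clean. This uses $O(1)$ applications of the $\mathtt{FANOUT}_\ell$ circuit in sequence, for total depth $O(d)$, and polynomially many copies in parallel, for $(n+a)^{O(1)}$ ancillae.

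\textbf{Main obstacle.} The delicate step is the parameter bookkeeping in the second paragraph. We need to ensure $\ell\ge m^{\delta}$ for a fixed constant $\delta$ while $\eta$ stays inverse-polynomial; this is exactly where the polylogarithmic lower bound on $\varepsilon$ and the threshold $k>\log^A n$ are used. We also need the dependence on $n$ versus $m$ to remain polynomial in the ancilla bound.
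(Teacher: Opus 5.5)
Your proposal is correct and follows the paper's proof essentially step for step. Both arguments negate the inputs so the transition sits between weights $k-1$ and $k$, restrict to $m=2k-1$ bits, apply Lemma~\ref{lem:real-valued-central-fourier-tail} with $\gamma=2\varepsilon$, use the split at $k\le\log^A n$ with $A>12c$ to get $\ell\ge m^{1/3}$ for Fact~\ref{fact:fourier-tail-to-fanout}, and finish with a three-level $\mathtt{FANOUT}_\ell$ tree. You do not need to enlarge $A$ to absorb the constant factor, since $m>\log^A n$ already gives $\ell>\sqrt{2}\,\varepsilon^2\sqrt{m}>\sqrt{2}\,m^{1/3}$.
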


\begin{proof}
    Set $m \coloneqq 2k-1$. By the definition of $\rho(f)$, either $f_{k-1}\ne f_k$ or $f_{n-k}\ne f_{n-k+1}$. If $f_{n-k}\ne f_{n-k+1}$, negate every input bit of both $f$ and $C$. This preserves the transition radius, circuit parameters, and worst-case probability while moving this transition between weights $k-1$ and $k$, so we may assume without loss of generality that $f_{k-1}\ne f_k$.

    Keep any $m$ input bits and fix the remaining $n-m$ bits to zero. Let $f':\{0,1\}^m\to\{0,1\}$ be the resulting restriction of $f$, and let $C'$ be the corresponding restriction of $C$. Then
    $$
    f'_i = f_i \qquad \text{for all } 0 \le i \le m.
    $$
    In particular, $f'_{k-1} \ne f'_k$, and hence $\rho(f')=k$. Moreover, $C'$ has depth $d$, uses $n+a-m$ ancillae, and computes $f'$ with worst-case probability at least $1/2+\varepsilon$.

    Define $h:\{0,1\}^m\to\{\pm1\}$ by $h(x)\coloneqq(-1)^{f'(x)}$. By the worst-case probability guarantee for $C'$,
    $$
    g_{C'}(x)h(x)\ge2\varepsilon
    \qquad
    \text{for every }x\in\{0,1\}^m.
    $$
    Applying Lemma~\ref{lem:real-valued-central-fourier-tail} to $g_{C'}$ and $h$ with $\gamma\coloneqq2\varepsilon$ gives
    $$
    \mathbf{W}^{\ge\ell}[g_{C'}]
    \ge
    \varepsilon^2\sqrt{\frac{2}{m}},
    $$
    where
    $$
    \ell\coloneqq
    \left\lfloor\varepsilon^2\sqrt{2m}\right\rfloor+1.
    $$

    Fix a constant $A>12c$. If $k\le\log^A n$, the conclusion follows from Fact~\ref{fact:polylog-fanout}. We may therefore assume that $k>\log^A n$. Since $m\ge k$ and $\varepsilon\ge1/\log^c n$, for all sufficiently large $n$ we have
    $$
    \frac{\varepsilon^2\sqrt{2m}}{m^{1/3}}
    =
    \sqrt{2}\,\varepsilon^2m^{1/6}
    \ge
    \sqrt{2}\,\log^{A/6-2c}n
    \ge1.
    $$
    It follows that
    $$
    \ell
    >
    \varepsilon^2\sqrt{2m}
    \ge
    m^{1/3}.
    $$
    Fact~\ref{fact:fourier-tail-to-fanout}, applied to $C'$ with $\delta=1/3$ and
    $$
    \eta
    =
    \varepsilon^2\sqrt{\frac{2}{m}},
    $$
    now gives a depth-$O(d)$ circuit implementing $\mathtt{FANOUT}_\ell$ using
    $$
    O\!\left(
        \frac{(n+a)\sqrt{m}\log^2(\ell+1)}{\varepsilon^2}
    \right)
    $$
    ancillae. Here we used the fact that $C'$ acts on $m$ inputs and has $n+a-m$ ancillae.

    Finally, $\ell^3\ge m\ge k$. A three-level fanout tree, pruned after producing $k$ targets, therefore implements $\mathtt{FANOUT}_k$ using $O(k/\ell)$ applications of $\mathtt{FANOUT}_\ell$. The resulting circuit has depth $O(d)$. Moreover, since $\ell>\varepsilon^2\sqrt{2m}$ and $\ell\le m$, the number of ancillae is at most
    $$
    O\!\left(
        \frac{(n+a)k\log^2(k+1)}{\varepsilon^4}
    \right).
    $$
    Since $1/\varepsilon\le\log^c n$, this quantity is $(n+a)^{O(1)}$, as required.
\end{proof}

\subsection{Consequences}

Theorems~\ref{thm:upper-bound} and~\ref{thm:lower-bound} together prove Theorem~\ref{thm:intro-main}. When $k=0$, the function $f$ is constant and $\mathtt{FANOUT}_0$ is the identity, so both tasks are trivial. For $k\ge1$, Theorem~\ref{thm:upper-bound} computes $f$ exactly using $\mathtt{FANOUT}_k$, while Theorem~\ref{thm:lower-bound} constructs $\mathtt{FANOUT}_k$ from any circuit computing $f$ with worst-case probability at least $1/2+1/\log^c n$, for any fixed constant $c>0$. In particular, $f \in \mathsf{QAC}^0$ if and only if $\mathtt{FANOUT}_k \in \mathsf{QAC}^0$. We now formally restate and prove the two consequences given in the introduction.

\begin{restatedcorollarytwo}
    Let $f:\{0,1\}^n\to\{0,1\}$ be symmetric, and fix any constant $c>0$. If $\rho(f)\ge n^\delta$ for some constant $\delta>0$, then computing $f$ with worst-case probability at least $1/2+1/\log^c n$ is $\mathsf{QAC}^0_{\mathrm{f}}$-complete.
\end{restatedcorollarytwo}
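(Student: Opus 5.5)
To prove Corollary~2, I would show the two directions of the equivalence with $\mathtt{FANOUT}_n$ under $\mathsf{QAC}^0$ reductions separately. Set $k\coloneqq\rho(f)\ge n^\delta$.

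\emph{Completeness: computing $f$ yields $\mathtt{FANOUT}_n$.} Suppose a $\mathsf{QAC}^0$ circuit $C$ of depth $d$ with $a\le n^{\alpha}$ ancillae computes $f$ with worst-case probability at least $1/2+1/\log^c n$.
\begin{enumerate}
\item Apply Theorem~\ref{thm:lower-bound} with $\varepsilon=1/\log^c n$. This gives a depth-$O(d)$ circuit implementing $\mathtt{FANOUT}_k$ with $(n+a)^{O(1)}=\mathsf{poly}(n)$ ancillae.
\item Amplify this to $\mathtt{FANOUT}_n$ with a fanout tree of depth $t\coloneqq\lceil 1/\delta\rceil$. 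The root copies the control bit to $k$ fresh ancillae initialized to $\ket{0}$. Each of those copies then drives another $\mathtt{FANOUT}_k$, and so on.
\item After $t$ levels there are $k^t\ge n^{\delta t}\ge n$ copies. I prune the tree so that exactly $n$ leaves land on the original targets, using CNOT-style fanout onto the targets at the last level.
\item Uncompute the intermediate ancillae by running levels $1,\dots,t-1$ in reverse. Since each $\mathtt{FANOUT}_k$ is its own inverse and the intermediate ancillae hold copies of $b$, this returns them to $\ket{0}$ cleanly.
\end{enumerate}
The total depth is $O(td)=O(d/\delta)$, which is constant. The tree uses $O(n/k)\le \mathsf{poly}(n)$ applications of $\mathtt{FANOUT}_k$, each with polynomially many ancillae, so the circuit has polynomial size. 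The only care needed is to ensure every $\mathtt{FANOUT}_k$ invocation is the clean implementation, so that its internal ancillae are restored. Then the composite is exact and clean.

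\emph{Membership: $\mathtt{FANOUT}_n$ suffices to compute $f$.} This direction is simpler. Since $k\le n$, $\mathtt{FANOUT}_k$ is obtained from $\mathtt{FANOUT}_n$ by letting the extra targets be idle ancillae, or via Fact~\ref{fact:symmetric-in-qacf} directly. Theorem~\ref{thm:upper-bound} then places $f\in\mathsf{QAC}^0[\mathtt{FANOUT}_k]\subseteq\mathsf{QAC}^0_{\mathrm f}$, and exact computation in particular meets the worst-case probability requirement.

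\emph{Main obstacle.} The only subtle step is bookkeeping the composition inside $\mathsf{QAC}^0$. Theorem~\ref{thm:lower-bound} produces circuits depending on $n$, while the tree nodes act on fewer qubits. I would handle this by always invoking the $\mathtt{FANOUT}_k$ circuit built for input length $n$, padding unused targets with ancillae, so that the same circuit is reused uniformly at every node. Combining the two directions gives the claimed $\mathsf{QAC}^0_{\mathrm f}$-completeness.
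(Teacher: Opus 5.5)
Your proposal is correct and follows essentially the same route as the paper: Theorem~\ref{thm:lower-bound} yields $\mathtt{FANOUT}_k$, a fanout tree of depth $O(1/\delta)$ (with the intermediate levels uncomputed) yields $\mathtt{FANOUT}_n$, and the converse is an exact $\mathsf{QAC}^0_{\mathrm{f}}$ computation of $f$, which the paper takes directly from Fact~\ref{fact:symmetric-in-qacf}. One small bookkeeping fix: pad the unused targets of a pruned last-level node with ancillae prepared in $\ket{+}$, on which the controlled $X$ acts trivially, and Hadamard them back to $\ket{0}$ afterwards; padding with $\ket{0}$ ancillae at the last level would leave behind uncleared copies of $b$.
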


\begin{proof}
    Set $k=\rho(f)$. By Theorem~\ref{thm:lower-bound}, any circuit computing $f$ with worst-case probability at least $1/2+1/\log^c n$ yields a circuit implementing $\mathtt{FANOUT}_k$. Since $k\ge n^\delta$, a fanout tree of depth $O(1/\delta)$ implements $\mathtt{FANOUT}_n$. Conversely, Fact~\ref{fact:symmetric-in-qacf} gives an exact $\mathsf{QAC}^0_{\mathrm{f}}$ computation of $f$, which in particular has the required worst-case probability. Thus, computing $f$ with worst-case probability at least $1/2+1/\log^c n$ and implementing $\mathtt{FANOUT}_n$ are equivalent under $\mathsf{QAC}^0$ reductions.
\end{proof}

\begin{restatedcorollarythree}
    Let $f:\{0,1\}^n\to\{0,1\}$ be symmetric, and suppose that, for some constant $\eta>0$,
    $$
    \widetilde{\deg}(f)=\Omega\!\left(n^{1/2+\eta}\right).
    $$
    Then, for every constant $c>0$, computing $f$ with worst-case probability at least $1/2+1/\log^c n$ is $\mathsf{QAC}^0_{\mathrm{f}}$-complete.
\end{restatedcorollarythree}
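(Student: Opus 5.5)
The plan is to reduce Corollary~3 to Corollary~2 by turning the approximate-degree lower bound into a polynomial lower bound on the transition radius via Paturi's theorem. First, $f$ is nonconstant. For all sufficiently large $n$ a constant function has approximate degree $0$, which contradicts $\widetilde{\deg}(f)=\Omega(n^{1/2+\eta})$. So $\Delta_f\neq\varnothing$, and Fact~\ref{fact:paturi} applies.

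Next, combine Fact~\ref{fact:paturi} with the identity $\Gamma(f)=n+1-2\rho(f)$ from the preliminaries. This gives $n-\Gamma(f)=2\rho(f)-1$, and therefore
$$
\widetilde{\deg}(f)=\Theta\!\left(\sqrt{n\bigl(2\rho(f)-1\bigr)}\right)=\Theta\!\left(\sqrt{n\rho(f)}\right),
$$
using $\rho(f)\ge1$ so that $2\rho(f)-1=\Theta(\rho(f))$. The upper-bound half of Paturi's $\Theta$ yields a constant $C_1$ with $\widetilde{\deg}(f)\le C_1\sqrt{n\rho(f)}$. The hypothesis yields a constant $C_2>0$ with $\widetilde{\deg}(f)\ge C_2 n^{1/2+\eta}$ for large $n$. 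Squaring and comparing gives
$$
\rho(f)\ge \frac{C_2^2}{C_1^2}\,n^{2\eta}.
$$

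To match the hypothesis of Corollary~2 literally, I will absorb the constant factor into the exponent. Take $\delta=\eta$; then for all sufficiently large $n$ we have $(C_2/C_1)^2 n^{2\eta}\ge n^{\eta}$, hence $\rho(f)\ge n^{\delta}$. Finitely many small $n$ do not affect the asymptotic, circuit-family statement.

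Finally, invoke Corollary~2 with this $\delta$ and the given $c$. It says that computing $f$ with worst-case probability at least $1/2+1/\log^c n$ is $\mathsf{QAC}^0_{\mathrm{f}}$-complete. I do not expect a real obstacle. The only points needing care are the use of the correct direction of Paturi's $\Theta$-bound (the upper bound on $\widetilde{\deg}$ in terms of $\rho$) and the bookkeeping of constants when converting $\Omega(n^{2\eta})$ into a clean $n^\delta$.
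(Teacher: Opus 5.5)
Your proposal is correct and follows essentially the paper's own proof. Both use Paturi's theorem together with $n-\Gamma(f)=2\rho(f)-1$ to get $\rho(f)=\Omega(n^{2\eta})$, hence $\rho(f)\ge n^{\eta}$ for large $n$, and then invoke Corollary~2 with $\delta=\eta$. Your version simply makes the constants, and the direction of the $\Theta$-bound being used, explicit.
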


\begin{proof}
    The hypothesis implies that $f$ is nonconstant. Fact~\ref{fact:paturi} and the identity $n-\Gamma(f)=2\rho(f)-1$ give
    $$
    \widetilde{\deg}(f)
    =\Theta\!\left(\sqrt{n\bigl(2\rho(f)-1\bigr)}\right).
    $$
    It follows that $\rho(f)=\Omega(n^{2\eta})$. In particular, $\rho(f)\ge n^\eta$ for sufficiently large $n$. The result now follows from Corollary~\ref{cor:intro-qacf-complete}.
\end{proof}

\bibliographystyle{alphaurl}
\bibliography{ref.bib}

\appendix

\section{\texorpdfstring{$\mathtt{EXACT}^n_k$}{EXACT(k)} is in
\texorpdfstring{$\mathsf{QAC}^0[\mathtt{FANOUT}_k]$}{QAC0[FANOUT(k)]}}
\label{app:exact-from-fanout}

In this appendix, we prove Fact~\ref{fact:exact-from-fanout}. The
proof uses the following number-theoretic fact.

\begin{fact}[{\cite[Lemma~1]{HastadWegenerWurmYi1994OptimalDepth}}]\label{fact:number-theoretic}
    Let $\varnothing\ne S\subseteq[n]$. There exists an integer $m$
    with $|S|\le m\le O(|S|^2\log n)$ such that
    $i\not\equiv j\pmod m$ for all distinct $i,j\in S$.
\end{fact}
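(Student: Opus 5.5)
Write $s\coloneqq|S|$. If $s=1$ the condition is vacuous and $m=1$ works, so assume $s\ge2$. We will look for $m$ among the \emph{primes} in a dyadic window $(X/2,X]$, with $X$ chosen so that $X/2\ge s$ and $X=O(s^2\log n)$. The argument is a counting argument: a prime $p$ fails exactly when $p$ divides one of the finitely many pairwise differences of elements of $S$, and the size of these differences limits how many large primes can fail.

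First, consider the product
$$
P\coloneqq\prod_{\{i,j\}\subseteq S,\ i\ne j}|i-j|.
$$
Each factor is a positive integer at most $n$, and there are $\binom{s}{2}\le s^2$ factors. Hence $1\le P\le n^{s^2}$. A modulus $m$ separates all residues of $S$ if and only if $m$ divides none of the differences $|i-j|$. For a prime $p$ this holds if and only if $p\nmid P$.

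Next, suppose for contradiction that every prime $p\in(X/2,X]$ divides $P$. Distinct primes dividing $P$ have a product that also divides $P$, so
$$
\prod_{X/2<p\le X}p\le P\le n^{s^2}.
$$
We bound the left-hand side from below using Chebyshev's estimate for $\theta(x)=\sum_{p\le x}\ln p$. There are absolute constants $c>0$ and $X_0$ such that $\theta(X)-\theta(X/2)\ge cX$ for all $X\ge X_0$; this follows from the prime number theorem, or from the elementary bounds on $\binom{2N}{N}$ that give Bertrand-type estimates. The product of the primes in $(X/2,X]$ is therefore at least $e^{cX}$.

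Now choose
$$
X\coloneqq\max\Bigl\{X_0,\ 2s,\ \tfrac{2}{c}\,s^2\ln n\Bigr\}=O(s^2\log n).
$$
For this $X$ we have $e^{cX}>n^{s^2}$ (when $n\ge2$; the case $n=1$ is trivial since then $s\le1$). This contradicts the displayed inequality. Hence some prime $m\in(X/2,X]$ does not divide $P$. This $m$ satisfies $m>X/2\ge s$, $m\le X=O(s^2\log n)$, and separates all residues of $S$.

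The only nontrivial ingredient is the lower bound $\theta(X)-\theta(X/2)=\Omega(X)$, and this is the step I expect to need care. If we instead counted primes dividing $P$ only via $\log_2 P$, we would pick up an extra $\log\log n$ or $\log s$ factor. The dyadic-window product argument avoids this, because it compares the logarithm of a product of primes directly with $\ln P\le s^2\ln n$.
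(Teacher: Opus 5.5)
The paper gives no proof of this statement. It is imported as a black box from Lemma~1 of Håstad, Wegener, Wurm, and Yi, so there is no in-paper argument to compare against. Your proof is correct and self-contained:
\begin{itemize}
\item the product $P$ of the $\binom{s}{2}$ pairwise differences satisfies $P\le n^{s^2}$;
\item a prime separates $S$ exactly when it does not divide $P$;
\item the Chebyshev-type estimate $\prod_{X/2<p\le X}p\ge e^{cX}$ with $X=\Theta(s^2\log n)$ forces some prime in the window to miss $P$;
\item that prime automatically satisfies $s\le m\le X$.
\end{itemize}

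You can remove the step you flagged as delicate. The lower bound $m\ge|S|$ holds for \emph{any} separating modulus, because $|S|$ pairwise distinct residues need at least $|S|$ residue classes. Equivalently, any prime $p<s$ divides some difference by pigeonhole, and hence divides $P$. So you can take the product of \emph{all} primes $p\le X$ and use only the basic Chebyshev bound $\theta(X)\ge cX$. With $X=\max\{X_0,\tfrac{2}{c}s^2\ln n\}$, some prime $p\le X$ does not divide $P$, and pigeonhole gives $p\ge s$ for free. The dyadic window and the stronger estimate $\theta(X)-\theta(X/2)=\Omega(X)$ are then unnecessary.

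Your version does give slightly more, namely a good prime in a prescribed window $(X/2,X]$. Appendix~A does not need this. It only uses a uniform bound $L=O(k^2\log n)$ with a separating $m\in\{k,\dots,L\}$ for every $k$-subset, which both versions provide.
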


\begingroup
\renewcommand{\thetheorem}{\ref*{fact:exact-from-fanout}}
\begin{fact}
    Let $k=k(n)$ satisfy $0\le k(n)\le n/2$. Then
    $\mathtt{EXACT}^n_k\in
    \mathsf{QAC}^0[\mathtt{FANOUT}_k]$.
\end{fact}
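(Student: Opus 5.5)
The plan is to prove that $\mathtt{EXACT}^n_k\in\mathsf{QAC}^0[\mathtt{FANOUT}_k]$ by combining Fact~\ref{fact:number-theoretic} with the standard $\mathsf{QAC}^0$-with-fanout technique for testing a count modulo $m$. When $k$ is polylogarithmic, Fact~\ref{fact:polylog-fanout} already supplies $\mathtt{FANOUT}_{\mathsf{polylog}}$ inside $\mathsf{QAC}^0$. In general, however, we must avoid ever fanning out any bit more than $O(k)$ times.

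The first step is to reduce to testing weight modulo one small modulus on a small input, via hashing.
\begin{itemize}
\item We partition the $n$ input bits into $B\approx n/k$ blocks of size about $k$.
\item $|x|=k$ fails exactly when either some arrangement puts more than $k$ ones in total, or the total is less than $k$.
\item A cleaner route is to note that $|x|=k$ iff $\mathtt{THRESHOLD}_k(x)\wedge\neg\mathtt{THRESHOLD}_{k+1}(x)$. Moreover, $\mathtt{THRESHOLD}_k(x)=1$ iff some set $T$ of $k$ positions is all ones, and we can certify this using hashing.
\end{itemize}
Concretely, I will use Fact~\ref{fact:number-theoretic} in the following form. For the set $S$ of positions of ones (size at most $k+1$ in the relevant case), there is a small modulus $m=O(k^2\log n)$ under which these positions are distinct residues. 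So, for the correct $m$ (guessed among polynomially many candidates in parallel), the map $i\mapsto i\bmod m$ is injective on the ones.

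The second step is to compute, for each candidate $m\le O(k^2\log n)$ and each residue $r\in\{0,\dots,m-1\}$, the bit $y^{(m)}_r=\bigvee_{i\equiv r}x_i$. This is a single OR gate per residue class and needs no fanout beyond one copy of each $x_i$ per $m$. Copying each $x_i$ to all $\mathsf{polylog}$ many candidate $m$ uses Fact~\ref{fact:polylog-fanout}. The issue of nested controls from the ORs is avoided because the ORs are on disjoint classes, so they run in parallel for a fixed copy.

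When $m$ separates the ones, $|y^{(m)}|=|x|$; otherwise $|y^{(m)}|<|x|$. Hence
$$|x|=\max_m |y^{(m)}|,$$
restricted to the case $|x|\le k+1$, while the case $|x|>k+1$ must be detected separately.

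The third step is to test $|y^{(m)}|=k$ on a string of length $O(k^2\log n)$. The idea is the Moore/Høyer–Špalek phase trick: with $\mathtt{FANOUT}$ on the $m'$-bit string we can apply $e^{2\pi i\,|y|/q}$-type rotations and compute $\mathtt{EXACT}$ exactly. The obstacle is that this string has length up to $k^2\log n$, not $k$. To handle it, I would again split into groups and use $\mathtt{FANOUT}_k$ trees of constant depth when $k\ge$ polylog. A polynomial blow-up (fanout of size $k^2\log n$ from $\mathtt{FANOUT}_k$ via a depth-$3$ tree, valid once $k\ge\log n$) suffices. The final answer is the OR over $m$ of $[|y^{(m)}|\ge k]$, combined with the negation of the analogous test for $k+1$. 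Each piece is computed cleanly and uncomputed.

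I expect the main obstacle to be the first step's promise, i.e.\ handling inputs with $|x|>k+1$ so that the hashing bound $|S|\le k+1$ applies. I would address this by computing the thresholds themselves through hashing: $\mathtt{THRESHOLD}_{k+1}(x)=1$ iff some $(k+1)$-subset is all ones. Any such subset hashes injectively for some $m$, giving $|y^{(m)}|\ge k+1$; conversely, $|y^{(m)}|\le|x|$ always. So $\mathtt{THRESHOLD}_t(x)=\bigvee_m[|y^{(m)}|\ge t]$ holds without any promise, and it remains only to verify the clean uncomputation of ancillae.
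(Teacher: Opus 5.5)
Your proposal is correct and follows essentially the paper's proof: write $\mathtt{EXACT}^n_k=\mathtt{THRESHOLD}^n_k\wedge\neg\mathtt{THRESHOLD}^n_{k+1}$, and express each threshold as an $\mathtt{OR}$ over moduli $m=O(k^2\log n)$ of thresholds of the residue-class $\mathtt{OR}$s $y^{(m)}$ via Fact~\ref{fact:number-theoretic}, computing the inner thresholds with fanout of size $O(k^2\log n)$ built from $\mathtt{FANOUT}_k$ by a constant-depth tree (or from Fact~\ref{fact:polylog-fanout} when $k$ is small). One slip to fix: there are $\Theta(k^2\log n)$ candidate moduli, not polylogarithmically many, so copying each $x_i$ once per modulus also needs that tree-built $\mathtt{FANOUT}_{O(k^2\log n)}$, not Fact~\ref{fact:polylog-fanout} alone.
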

\endgroup

\begin{proof}
The case $k=0$ is immediate: $\mathtt{EXACT}^n_0$ is the $n$-bit $\mathtt{NOR}$ function, which belongs to $\mathsf{QAC}^0$. We may therefore assume that $k\ge1$. 

We first show that $\mathtt{THRESHOLD}^n_k\in\mathsf{QAC}^0[\mathtt{FANOUT}_k]$. Observe that
$$
\mathtt{FANOUT}_{L} \in \mathsf{QAC}^0[\mathtt{FANOUT}_k] \qquad\text{for every } L=O(k^2\log n).
$$
This follows from the recursive nature of fanout together with Fact~\ref{fact:polylog-fanout}. By Fact~\ref{fact:number-theoretic}, there exists $L=O(k^2\log n)$ such that, for every subset $S\subseteq[n]$ of size $k$, there is some $m\in\{k,\ldots,L\}$ for which the elements of $S$ are pairwise distinct modulo $m$. For each such $m$ and $\ell\in\{0,\ldots,m-1\}$, let
$$
A^m_\ell\coloneqq\{i\in[n]:i\equiv\ell\pmod m\},
\qquad
y^m_\ell\coloneqq\mathtt{OR}(x|_{A^m_\ell}).
$$
Then
$$
\mathtt{THRESHOLD}^n_k(x)
=
\bigvee_{m=k}^{L}
\mathtt{THRESHOLD}^m_k
\bigl(y^m_0,\ldots,y^m_{m-1}\bigr).
$$
Indeed, if $|x|\ge k$, choose any $k$ positions on which $x$ is $1$. By Fact~\ref{fact:number-theoretic}, for some $m\in\{k,\ldots,L\}$ these positions are pairwise distinct modulo $m$, and hence at least $k$ of the corresponding $y^m_\ell$ are $1$. Conversely, if at least $k$ of the $y^m_\ell$ are $1$ for some $m$, then at least $k$ distinct positions of $x$ are $1$.

This identity yields a circuit for $\mathtt{THRESHOLD}^n_k$. Make $L$ copies of each input bit using $\mathtt{FANOUT}_L$ and compute all the $y^m_\ell$ in parallel. By Fact~\ref{fact:symmetric-in-qacf}, all the inner threshold functions can be computed in parallel using fanout gates of size at most $m\le L$. A final $\mathtt{OR}$ gate computes the outer disjunction. Since $L=O(k^2\log n)$ is polynomial in $n$, the entire construction has constant depth and uses polynomially many ancillae. Thus
$$
\mathtt{THRESHOLD}^n_k
\in
\mathsf{QAC}^0[\mathtt{FANOUT}_k].
$$

The same argument, with $k+1$ in place of $k$, gives
$$
\mathtt{THRESHOLD}^n_{k+1}
\in
\mathsf{QAC}^0[\mathtt{FANOUT}_k].
$$
Finally,
$$
\mathtt{EXACT}^n_k
=
\mathtt{THRESHOLD}^n_k
\wedge
\neg\mathtt{THRESHOLD}^n_{k+1},
$$
which proves the fact.
\end{proof}

\end{document}